\definecolor{Green}{rgb}{0,.8,.4}
\definecolor{Plum}{rgb}{.5,0,1}
\definecolor{cyan}{rgb}{0.50,.9,0.9}
\definecolor{lightblue}{rgb}{0.3, 0.6, .7}
\definecolor{ocre}{RGB}{243,102,25}
\definecolor{mygray}{RGB}{243,243,244}
\newtheorem{thm}{Theorem}[section]
\newtheorem{lemma}[thm]{Lemma}
\theoremstyle{definition}
\theoremstyle{remark}
\numberwithin{equation}{section}
\newtheorem{theorem}{Theorem}[section]
\newtheorem{counterexample}{Counterexample}[section]
\newtheorem{corollary}{Corollary}[section]
\newcommand{\id}{\mathbbm{1}}
\newcommand{\cA}{{\mathcal A}}
\newcommand{\cG}{{\mathcal G}}
\newcommand{\cH}{{\mathcal H}}
\newcommand{\supp}{\operatorname{supp}}
\newcommand{\caE}{{\mathcal E}}
\newcommand{\caP}{{\mathcal P}}
\newcommand{\caS}{{\mathcal S}}
\newcommand{\caV}{{\mathcal V}}
 \newcommand{\C}{\mathbb{C}}
 \newcommand{\F}{\mathbb{F}}
 \renewcommand\bar\overline
\newcommand{\Tr}{\mathrm{Tr}}
\newcommand{\be}{\begin{equation}}
	\newcommand{\ee}{\end{equation}}
\newcommand{\bea}{\begin{eqnarray}}
	\newcommand{\eea}{\end{eqnarray}}
\newcommand{\beann}{\begin{eqnarray*}}
	\newcommand{\eeann}{\end{eqnarray*}}
\title{Long-Range Antiferromagnetic Order in the AKLT Model on Trees and Treelike Graphs}
\author{Thomas Jackson}
\address{\textnormal{(Thomas Jackson)} Department of Mathematics and Center for Quantum Mathematics and Physics, University of California,  Davis, CA  95616-8633, USA}
\address{\textnormal{(Thomas Jackson)} Department of Mathematics, United Arab Emirates University,  Al Ain, Abu Dhabi, UAE}
\date{\today}
\begin{document}
\maketitle
\begin{abstract}
We extend the result of Fannes, Nachtergaele, and Werner on long-range order in the AKLT model on Cayley trees to include various trees and tree-like graphs that obey certain conditions. Our examples split into three cases: Cayley-like tree-like graphs generated by a finite subgraph, for which we have a simple condition; arbitrary trees with a prescribed growth rate of their volume; and bilayer Cayley trees.
\end{abstract}
\begin{section}*{Introduction} 
The AKLT model \cite{aklt} is the first example of a proven Haldane phase \cite{haldane} and of a matrix product state (MPS) \cite{mps} and consequently of a tensor network state (TNS) \cite{orus}. However, basic questions about its behavior in higher dimensions remain unanswered. The ground state has the highly unusual property \cite{aah} that expectations of observables can be given by a classical partition function of a system in the same number of variables. This led Affleck, Kennedy, Lieb, and Tasaki to conjecture that it does not have a unique ground state and that the ground states have antiferromagnetic long-range order (LRO) for high-dimensional lattices and high-degree graphs. This was later shown for the Cayley trees (i.e. Bethe lattices) in \cite{fnw}, with some evidence for further results \cite{pomata21}, but otherwise a proof has remained elusive for general lattices. We extend the result of \cite{fnw} to include various trees and tree-like graphs that obey certain conditions. Our examples split into three cases: Cayley-like tree-like graphs generated by a finite subgraph, for which we have a simple condition; arbitrary trees with a prescribed growth rate of their volume; and bilayer Cayley trees.
This paper is organized as follows:
\par
In \textbf{Section 1} we introduce the AKLT Hamiltonian, prove it is unique up to boundary conditions, give our definition of our infinite-volume ground states, and the graphical language with which we will describe our trees and tree-like graphs.
\par
In \textbf{Section 2} we introduce the graphical language and compute the transfer operator for a single site; we compute explicitly the transfer function and give relevant properties and bounds which will be instrumental for our proof.
\par
In \textbf{Section 3} we prove the result of \cite{fnw} that Cayley trees of degree $5$ or more have degenerate ground state space in infinite volume in two ways which generalize in two separate ways as well.
\par
In \textbf{Section 4} we define the transfer operator on any finite graph with one outgoing index and finitely many ingoing indices. We define our tree-like graphs and show using the graphical picture a condition for a non-unique ground state.
\par
In \textbf{Section 5} we prove for a type of irregular tree that a geometric mean condition on the degree of the vertices is enough to ensure a non-unique ground state.
\par
In \textbf{Section 6} we discuss the bilayer Cayley trees and show the degree $d=5$ bilayer Cayley tree does not have a unique ground state. This shows that local degree determines uniqueness even when the graph has a similar overall structure.
\end{section}
\begin{section}{The AKLT Hamiltonian and ground state on a general tree}
Let $T=(\caV_T,\caE_T)$ be an infinite tree with vertex set $\caV_T$ and edge set $\caE_T$ and distinguished root $\overline{0}\in\caV_T$, with the property that there are no vertices such that $\deg(v)=1$ and $|v-\overline{0}|<\infty$, meaning the leaves are infinitely far away. Define the sequence of vertex sets $\{\caV^n_T\}_{n\geq 0}$ with $\caV^n_T=\{x\in\caV_T:\:|x-\overline{0}|=n\}$ where $|\cdot|$ denotes the familiar graph distance; we call $\caV_T^n$ the $n$th layer of the graph $T$, and call the corresponding $n$-layer tree $T_n=(\bigcup_{i\leq n}\caV_T^i,\bigcup_{i\leq n}\caE_T^i)$ where $\caE_T^n=\{e\in\caE_T:e\cap\caV_T^n\neq\emptyset\}$. We define the AKLT Hamiltonian on $T$ via the formal Hamiltonian \begin{equation}H=\sum_{x\in\caV_T}P^{(\deg(x)/2)}\end{equation} introduced in \cite{aklt}. This Hamiltonian is a sum of projectors, and we show that $\ker(H)\neq\emptyset$ and that the vectors $\psi\in\ker(H)$ are unique up to a choice of boundary condition; moreover, we show the Hamiltonian is frustration-free.
\begin{subsection}{The Weyl Representation and Uniqueness of Finite-Volume States}
We follow the proof of \cite{klt} to construct the AKLT ground state on an arbitrary graph.
Consider the representation of $\mathfrak{su}(2)$ on the space of homogeneous polynomials of degree $(\deg(x)+\deg(y))/2=2S_e$, denoted \begin{equation}\caS=\mathrm{span} \{u^{S_e-k}v^k\:|\:0\leq k\leq 2S_e\}\end{equation}
$\caS$ is a $2S_e+1$ dimensional vector space; note that our spin operators correspond to \begin{align*}
    S^+=S^x+iS^y=u\frac{\partial}{\partial v}\\S^-=S^x-iS^y=v\frac{\partial}{\partial u}\\S^z=\frac{1}{2}\left(u\frac{\partial}{\partial u}-v\frac{\partial}{\partial v}\right)
\end{align*}
And note that the eigenstates of $S^z$ are the polynomials $u^jv^{S_e-j}$ for $0\leq j\leq 2S_e$; we define an inner product so that the polynomials \begin{equation}u^jv^{2S_e-j}\left[\binom{2S_e}{j}(2S_e+1)\right]^{1/2}\end{equation} are an orthonormal basis. This gives a $2S_e+1$-dimensional irreducible representation of $\mathfrak{su}(2)$. The inner product can be made explicit via a change of basis where \begin{equation}u=e^{i\phi/2}\cos(\theta/2)\end{equation} and \begin{equation}v=e^{-i\phi/2}\sin(\theta/2)\end{equation} with $0\leq \theta\leq\pi$ and $0\leq\phi<2\pi$. Then we can define \begin{equation}\langle\psi,\phi\rangle=\int d\Omega \overline{\psi(u,v)}\phi(u,v)\end{equation} where \begin{equation}d\Omega=(4\pi)^{-1}\sin \theta d\theta d\phi\end{equation} It was shown by \cite{aah} that for each local observable $A$ there exists a unique polynomial in $\theta$ and $\phi$ labeled $A(\Omega)$ called the symbol such that \begin{equation}
\langle\psi,A\phi\rangle=\int d\Omega\overline{\psi(u,v)}\phi(u,v)A(\Omega)\end{equation}
At each site $x\in\caV$ we associate the variables $u_x$ and $v_x$ and the spin-$S_x$ subspace is given by the span of homogeneous polynomials of degree $S_x$. We then show the following.
\begin{theorem}\cite{klt}\label{klt}
    Let $\psi\in\cH$ such that $H\psi=0$; then there exists a unique $\phi$ such that $\phi$ is a polynomial in $u_x,v_x$ for $x\in\partial\Lambda$ and such that \begin{equation}\psi=\phi\prod_{(x,y)\in\caE}(u_xv_y-u_xv_y)\end{equation}
    Conversely if the last equation holds then $H\psi=0$
\end{theorem}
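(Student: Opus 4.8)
The plan is to follow the argument of \cite{klt}, converting the operator equation $H\psi = 0$ into a divisibility statement for polynomials, the divisor being the product of the edge singlets $\epsilon_e := u_xv_y - u_yv_x$ over $e = (x,y) \in \caE$. The first, structural, observation is that the finite-volume Hamiltonian on $\Lambda$ is a sum of orthogonal projections $P_e \ge 0$, one per edge $e = (x,y)$, with $P_e$ the projection of the pair of physical spins at $x$ and $y$ onto their maximal total spin. Hence, for $\psi \in \cH$, $\langle \psi, H\psi\rangle = \sum_{e} \|P_e\psi\|^2$, so $H\psi = 0$ if and only if $P_e\psi = 0$ for every $e \in \caE$; it therefore suffices to describe $\bigcap_{e} \ker P_e$, and frustration-freeness will follow as soon as one nonzero common kernel vector is produced.

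The essential step is a local, purely representation-theoretic computation. Identify the physical space at $x$ with the space $\caH_x$ of homogeneous polynomials of degree $\deg(x)$ in $(u_x,v_x)$, the irreducible $\mathfrak{su}(2)$-representation of spin $S_x := \deg(x)/2$; then $\cH$ on $\Lambda$ is the space of polynomials in $\{(u_x,v_x)\}_{x\in\Lambda}$ that are multihomogeneous of the prescribed degrees, and $P_e$ for $e = (x,y)$ acts on the $(x,y)$ tensor factor as the projection of $\caH_x \otimes \caH_y$ onto its maximal-spin summand $V_{S_x + S_y}$. I claim that a polynomial of bidegree $(\deg x, \deg y)$ in $(u_x,v_x,u_y,v_y)$ lies in $\ker P_e$ exactly when it is divisible by $\epsilon_e$. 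Indeed, $\epsilon_e$ is the diagonally $\mathfrak{su}(2)$-invariant singlet, so multiplication by $\epsilon_e$ is an equivariant injection of the bidegree-$(\deg x - 1, \deg y - 1)$ polynomials, i.e. of $V_{S_x - 1/2}\otimes V_{S_y - 1/2}$, into $\caH_x \otimes \caH_y$; the smaller tensor product, whose largest summand is $V_{S_x + S_y - 1}$, contains no copy of $V_{S_x + S_y}$, so its image lies in $\ker P_e$, and as $\dim(\caH_x\otimes\caH_y) - \dim V_{S_x + S_y} = \deg(x)\deg(y)$ equals the dimension of the bidegree-$(\deg x - 1, \deg y - 1)$ polynomials, the inclusion is an equality. (That the image meets $V_{S_x+S_y}$ only in $0$ can alternatively be seen by restricting to the diagonal $u_y = u_x$, $v_y = v_x$, where $\epsilon_e$ vanishes while $V_{S_x+S_y}$ restricts injectively onto the degree-$(\deg x + \deg y)$ polynomials in two variables.) Since $P_e$ touches only the $(x,y)$ factor and $\epsilon_e$ involves only those four variables, this passes verbatim to the full system: $P_e\psi = 0$ if and only if $\epsilon_e$ divides $\psi$ as a polynomial.

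Assembling these, $\psi \in \ker H$ if and only if $\epsilon_e$ divides $\psi$ for every $e \in \caE$. In a tree any two edges share at most one vertex, so the $\epsilon_e$ are pairwise coprime in the polynomial ring, which is a unique factorization domain; hence divisibility of $\psi$ by all the $\epsilon_e$ is equivalent to divisibility by $\Psi_0 := \prod_{(x,y)\in\caE}(u_xv_y - u_yv_x)$, and in $\psi = \phi\,\Psi_0$ the quotient $\phi$ is unique since $\Psi_0 \ne 0$ and the ring is a domain. A degree count then locates $\phi$ on the boundary: $\Psi_0$ is multihomogeneous of degree $\deg_\Lambda(x)$ in $(u_x,v_x)$ whereas $\psi$ has degree $\deg(x)$ there, so $\phi$ has degree $\deg(x) - \deg_\Lambda(x)$ in $(u_x,v_x)$, which vanishes at every interior vertex (where $\deg_\Lambda(x) = \deg(x)$) and is positive only for $x \in \partial\Lambda$; thus $\phi$ is a polynomial in the boundary variables alone. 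Conversely, for any polynomial $\phi$ in the boundary variables of the (forced) multidegree making $\psi = \phi\,\Psi_0$ lie in $\cH$, the product $\psi$ is divisible by every $\epsilon_e$ and hence annihilated by $H$; taking for instance $\phi = \prod_{x\in\partial\Lambda} u_x^{\deg(x) - \deg_\Lambda(x)}$ produces a nonzero such $\psi$, which also supplies the common kernel vector needed for frustration-freeness.

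I expect the one genuinely substantive point to be the local claim of the second paragraph — specifically, the assertion that the $\epsilon_e$-divisible polynomials fill out \emph{all} of the orthogonal complement of the maximal-spin summand, not merely a part of it; the multiplicity-one-plus-dimension count (or, equivalently, the diagonal-restriction check) is what makes this tight. Everything after it — globalizing via pairwise coprimality and then reading off the multidegree of $\phi$ — is routine bookkeeping.
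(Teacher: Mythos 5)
Your proof is correct and follows essentially the same route as the paper: reduce $H\psi=0$ to $P_e\psi=0$ edge by edge, show locally that the kernel of the maximal-spin projection is exactly the image of multiplication by the singlet $\epsilon_e$ via equivariance plus a dimension count, and then globalize through unique factorization. You are in fact slightly more careful than the paper on two bookkeeping points it leaves implicit — the pairwise coprimality of the $\epsilon_e$ needed to pass from individual divisibility to divisibility by the product, and the multidegree count showing $\phi$ involves only boundary variables.
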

\begin{proof}
We first prove two lemmata.
For any edge $(x,y)$ denote the Hilbert space $\cH_x\otimes\cH_y$ and label the subspaces \begin{equation}S_1=\{(u_xv_y-u_yv_x)\phi(u_x,v_x,u_y,v_y)\}\end{equation}
\begin{equation}S_2=\{\psi(u_x,v_x,u_y,v_y)\: :\:P^{2S_e}_{x,y}\psi=0\}\end{equation}
\begin{lemma}\cite{klt}
 $S_1\subseteq S_2$   
\end{lemma}
\begin{proof}
    Define $T^{\alpha}=S^{\alpha}_x+S^{\alpha}_y$ for $\alpha=x,y,z$. Then \begin{equation}T^{\alpha}(u_xv_y-u_yv_x)=0\end{equation} for all $\alpha$ and so \begin{equation}T^{\alpha}\psi=(u_xv_y-v_xu_y)T^{\alpha}\phi\end{equation} and thus \begin{equation}T^{2}\psi=(u_xv_y-v_xu_y)T^{2}\phi\end{equation}
    We note that $\phi$ can be written as a sum of eigenvectors of $T^2$ so that $\phi=\sum_{j}\phi_j$ where $T^2\phi_j=j(j+1)\phi_j$ and thus \begin{equation}T^{2}(u_xv_y-v_xu_y)\phi_j=j(j+1)(u_xv_y-v_xu_y)\phi_j\end{equation} and so $(u_xv_y-v_xu_y)\phi_j$ is an eigenstate of $T^2$; by assumption $\phi_{2S_e}=0$ because it is in the ground state of $P^{2S_e}$ and so $\psi\in S_2$ so we know $S_1\subseteq S_2$.\par
    Next note that $\dim S_2=(2S_e+1)^2-(4S_e+1)=(2S_e)^2$ because we have every eigenspace $(2S_e+1)^2$ minus the ones corresponding to highest spin $4S_e+1$. Note also that $S_1$ is $(2S_e)^2$ dimensional and thus $S_1=S_2$
\end{proof}
\begin{lemma}\cite{klt}
    If $\psi\in\cH=\bigotimes_k\cH_k$ satisfies $P^{2S_e}_{(x,y)}\psi=0$ then $\psi=(u_xv_y-u_yv_x)\phi$; the converse also holds.
\end{lemma}
\begin{proof}
    The space of all $\psi$ satisfying \begin{equation}P^{2S_e}_{(x,y)}\psi=0\end{equation} is $S_2\otimes(\bigotimes_{k\neq x,y}\cH_k)$ and similarly all $\psi$ such that $\psi=(u_xv_y-u_yv_x)\phi$ is \begin{equation}S_1\otimes\left(\bigotimes_{k\neq x,y}\cH_k\right)\end{equation} and thus from the previous lemma we have that the two conditions are the same.
\end{proof}
Now let $\psi\in\cH$ such that $H\psi=0$, Then $(u_xv_y-v_xu_y)$ is a factor of $\psi$ for every $(x,y)\in\caE$. The polynomial ring these are contained in is a unique factorization domain, so if $A$ and $B$ are factors of $\psi$ then $\psi=AB\gamma$ for some unique $\gamma$. Then $\psi=\phi\prod_{(x,y)\in\caE}(u_xv_y-u_yv_x)$ and the converse also holds. Thus $\psi$ is unique up to a polynomial of the remaining boundary variables $\phi$. 
\end{proof}
\end{subsection}
Now letting $T_n$ be the previously defined increasing and exhaustive sequence of subtrees, we can associate a finite volume state $\psi_n(B)$ of the Hamiltonian defined on $T_n$ which is dependent on the boundary condition $B_n\in M_2^{\otimes |\partial T_n|}$, where $\partial T_n=\{x\in\caV_{n+1}\setminus \caV_{n}\}$. We define the infinite-volume ground states of the model as weak limits of finite volume states where \begin{equation}\omega^n_{B_n}(A)=\langle\psi_n(B_n),A\psi_n(B_n)\rangle\end{equation} where $A\in\cA$ is a quasi-local observable and $B_n$ is a sequence of boundary conditions on $T_n$. We say the infinite volume state $\omega:\cA\to\C$ defined as \begin{equation}\omega(A):=\lim_{n\to\infty}\omega^n_{B_n}(A)\end{equation} is unique if the limit is independent of the sequence of boundary conditions $B_n$. One way to compute whether the infinite volume state is independent of the boundary variables taken in the limit is to compute the spectra of the transfer operators, as in the $1$-dimensional case.\par
In order to understand the AKLT model further, we fully describe the transfer operator on a single site of degree $d$.
\end{section}
\section{AKLT model transfer operator on a single site}
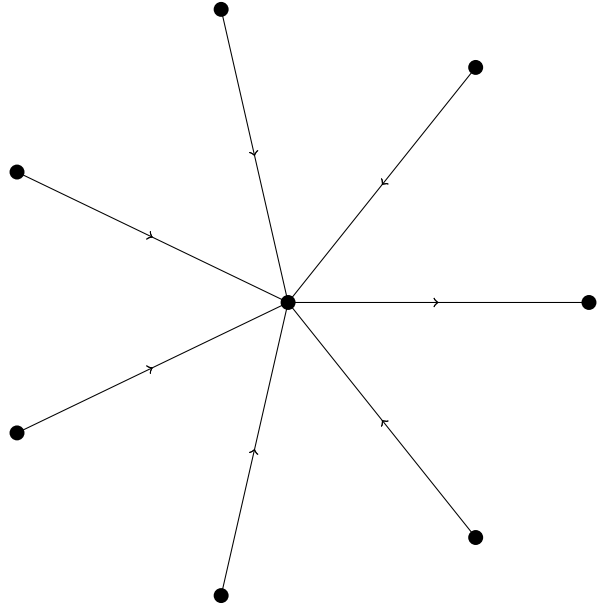
\begin{figure}
    \centering
    \begin{tikzpicture}[
    scale=2, 
    every edge/.style={thick, postaction={decorate}},
    decoration={markings, mark=at position 0.5 with {\arrow{>}}}
]

    \def\r{2} 

    \coordinate (center) at (0, 0);
    \node[circle, fill=black, inner sep=2pt] at (center) {};

    \foreach \i in {1, 2, 3, 4, 5, 6,7} {
        \coordinate (outer\i) at ({360/7*(\i-1)}:\r);
        \node[circle, fill=black, inner sep=2pt] at (outer\i) {};
        
        \ifnum\i=1
            \draw[postaction={decorate, decoration={markings, mark=at position 0.5 with {\arrow{>}}}}] (center) -- (outer\i); 
        \else
            \draw[postaction={decorate, decoration={markings, mark=at position 0.5 with {\arrow{>}}}}] (outer\i) -- (center); 
        \fi
    }

\end{tikzpicture}
    \caption{Vertex of degree $d=7$ with one outgoing index and six ingoing indices}
    \label{d7}
\end{figure}
First we describe each site operator on the lattice. From the valence bond solid picture we have that we can view each physical site as composed of $d$ spin-$\frac{1}{2}$ entities living on the edges projected onto a physical spin-$\frac{d}{2}$ space. Note we have a physical spin-$d/2$ representation $\caP_{(d/2)}$ of $\mathfrak{su}(2)$ of dimension $d+1$ which can be uniquely found as a subspace, meaning \begin{equation}
\caP_{(d/2)}\subset \bigotimes_{i=1}^d\caP_{(1/2)}
\end{equation}
where $\caP_{(1/2)}$ is the $2$-dimensional spin representation.
We have then that the uniqueness of this subspace implies the existence of an SU(2) intertwiner \begin{equation}V^*:\C^{d+1}\otimes (\C^2)^{\otimes d-1}\to \C^2\end{equation} which maps the ingoing space and physical
space $\C^{d+1}\otimes (\C^2)^{\otimes d-1}$ to the outgoing auxiliary space $\C^2$ and satisfies \begin{equation}
    V(S_i)=(\id_{\mathrm{tp}}\otimes S^{d-1}_i+S_i^{\mathrm{tp}}\otimes\id_{d-1})
\end{equation} where $S^d_i$ is the $i$th SU(2) generator in $\caP_{(d/2)}$ and $S^{\mathrm{tp}}_i$ is the $i$th SU(2) generator on $\bigotimes_{i=1}^d\caP_{(1/2)}$. We can construct an associated non-normalized transfer operator $\F:M_2(\mathbbm{C})^{\otimes d-1}\to M_2(\mathbbm{C})$ given by \begin{equation}\F(X\otimes Y)=V^*(X\otimes Y)V\end{equation} where we normalize so that $\F(\id\otimes\id)=c\id$. We will refer to this unnormalized transfer operator for ease of computation in some places. To compute infinite volume expectations on the tree, we will need to know how $\F$ evaluates when
$X=\id$ and $Y$ is arbitrary; in this case we have that the transfer operator breaks down into spin subspaces labeled by $j$ as \begin{equation}\label{unnormalized}\F(\id\otimes Y)=\F(Y)=\sum_{j=-\frac{d-1}{2}}^{\frac{d-1}{2}}W_j Y W_j^*\end{equation} where \begin{equation}S=|\uparrow\rangle\langle \downarrow|-|\downarrow\rangle\langle \uparrow|,\qquad W_j=SP_j\end{equation} where $P_j$ is the projection onto the subspace of spin $j$; note we drop the dependence on the identity since it will be taken for granted. We choose our boundary indices to be written in the Hilbert-Schmidt basis, that is, as sums of tensor products of Pauli matrices, and we will see that we can provide a formula for each term in this basis. We can compute these weights by choosing a basis of unnormalized symmetric states in the following way.\par  Let $|w_k\rangle$ be the symmetric state with $k$ ups so that the total spin is $j=k-(d-1)/2$ (for example $|w_1\rangle$ for $d=3$ would be $|\uparrow\downarrow\rangle+|\downarrow\uparrow\rangle$). On a site of degree $d$ one can write the unnormalized projection as
\begin{equation}P_j'=|\uparrow\rangle\langle w_{k-1}|+|\downarrow\rangle\langle w_k|\end{equation}
We then normalize by total number of terms which is the length of $w_{k-1}$ and $w_k$ added together, which gives $\binom{d-1}{k-1}+\binom{d-1}{k}=\binom{d}{k}$ so that 
\begin{equation}P_j=\frac{1}{\sqrt{\binom{d}{k}}}(|\uparrow\rangle\langle w_{k-1}|+|\downarrow\rangle\langle w_k|)\end{equation} as this ensures the normalization condition $\tilde{\F}(\id)=\id$; note we differentiate our normalized transfer operator $\tilde{\F}$ from our unnormalized transfer operator $\F$. Thus we have \begin{equation}\tilde{\F}(M)=\frac{2}{d+1}\sum_k W_j M W_j^*\end{equation} where $M\in M_2(\mathbbm{C})^{\otimes d-1}$ and $\tilde{\F}:M_2(\mathbbm{C})^{\otimes d-1} \to M_2(\mathbbm{C})$. Note that our normalization now insures \begin{equation}\tilde{\F}(\mathbbm{1}^{\otimes d-1})=\frac{2}{d+1}\sum_{k=0}^{d-1}\frac{\binom{d-1}{k}}{\binom{d}{k}}\mathbbm{1}=\frac{2}{d+1}\sum_{k=0}^{d-1}\frac{d-k}{d}\mathbbm{1}=\frac{2}{d+1}\frac{d+1}{2}\mathbbm{1}=\mathbbm{1}\end{equation} \par In order to know how 
this operator acts on a general matrix we look at how it acts on a tensor product of Pauli matrices. We have explicitly 
\begin{align*}    
\mathbbm{1}=\sigma_0=\begin{bmatrix}
1 & 0\\
0 & 1
\end{bmatrix},\qquad \sigma_1=\begin{bmatrix}
0 & 1\\
1 & 0
\end{bmatrix},\qquad\sigma_2=\begin{bmatrix}
0 & -i\\
i & 0
\end{bmatrix}, \qquad \sigma_3=\begin{bmatrix}
1 & 0\\
0 & -1
\end{bmatrix}
\end{align*} Letting \begin{equation}\label{ki}M=(\sigma_1)^{\otimes k_1}\otimes(\sigma_2)^{\otimes k_2}\otimes (\sigma_3)^{\otimes k_3}\otimes(\mathbbm{1})^{\otimes d-1-k_1-k_2-k_3}\end{equation} where this tensor product is taken in any order we have the following identity
\begin{theorem}
    For our unnormalized transfer operator defined in \ref{unnormalized}, if all the $k_i$'s are even in expression \ref{ki} we have \begin{equation}\F(M)=\frac{1}{k_1+k_2+k_3+1}\frac{\displaystyle\binom{\frac{k_1+k_2+k_3}{2}}{\frac{k_1}{2},\frac{k_2}{2},\frac{k_3}{2}}}{\displaystyle\binom{k_1+k_2+k_3}{k_1,k_2,k_3}}\mathbbm{1}\end{equation} 
\end{theorem}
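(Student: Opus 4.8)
We sketch the strategy. The idea is to first show, by pure representation theory, that $\F(M)$ must be a scalar multiple of $\mathbbm{1}$, and then to evaluate that scalar as a monomial average over the unit sphere.

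\emph{Step 1: reduce to a scalar.} Since $V^{*}\colon\caP_{(d/2)}\otimes(\C^{2})^{\otimes d-1}\to\C^{2}$ is an $\mathrm{SU}(2)$ intertwiner, exponentiating the intertwining relation for $V$ gives $Vg=(g_{(d/2)}\otimes g^{\otimes d-1})V$ for every $g\in\mathrm{SU}(2)$, where $g_{(d/2)}$ denotes the spin-$\tfrac d2$ representation; consequently
\[
\F\bigl(g^{\otimes d-1}\,Y\,(g^{-1})^{\otimes d-1}\bigr)=g\,\F(Y)\,g^{-1}\qquad\text{for all }Y\in M_2(\mathbbm{C})^{\otimes d-1}.
\]
I will apply this with $g=g_a=e^{-i\pi\sigma_a/2}\propto\sigma_a$, the $\pi$-rotation about the $a$-th coordinate axis, for which $g_a\sigma_a g_a^{-1}=\sigma_a$ and $g_a\sigma_b g_a^{-1}=-\sigma_b$ when $b\neq a$. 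Because $k_1,k_2,k_3$ are all even, conjugating the tensor monomial $M$ of \ref{ki} by $g_a^{\otimes d-1}$ returns $M$ for each $a\in\{1,2,3\}$; equivariance then forces $\F(M)=g_a\,\F(M)\,g_a^{-1}$, i.e.\ $\F(M)$ commutes with each $\sigma_a$, and a $2\times2$ matrix commuting with $\sigma_1$ and $\sigma_3$ is a multiple of $\mathbbm{1}$. So $\F(M)=c\,\mathbbm{1}$ with $c=\tfrac12\Tr\F(M)$, and this is the only place the parity hypothesis on the $k_i$ is used (if some $k_i$ is odd the same argument gives $\F(M)\propto\sigma_i$ instead).

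\emph{Step 2: identify $c$ with a trace over the symmetric subspace.} Recall $V$ is normalized so that $V^{*}V=\mathbbm{1}$ (this is the $k_i=0$ case), so $VV^{*}$ is the orthogonal projection onto $\mathrm{ran}(V)$. Write $\Tr\F(M)=\Tr_{\C^{2}}\!\bigl[V^{*}(\mathbbm{1}\otimes M)V\bigr]=\Tr\!\bigl[(\mathbbm{1}\otimes M)\,VV^{*}\bigr]$. The structural point is that $\mathrm{ran}(V)$, the unique copy of spin $\tfrac12$ in $\caP_{(d/2)}\otimes(\C^{2})^{\otimes d-1}$, lies inside $\caP_{(d/2)}\otimes\mathrm{Sym}^{d-1}(\C^{2})$: among the irreducible summands of $(\C^{2})^{\otimes d-1}$ only the fully symmetric one, spin $\tfrac{d-1}{2}$, can couple with the physical spin $\tfrac d2$ to contain spin $\tfrac12$, and it does so with multiplicity one. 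Hence $\Tr_{\mathrm{phys}}(VV^{*})$ is an $\mathrm{SU}(2)$-equivariant operator on $(\C^{2})^{\otimes d-1}$ supported on $\mathrm{Sym}^{d-1}(\C^{2})$, so by Schur's lemma it equals $\gamma\,\Pi$ with $\Pi$ the projection onto $\mathrm{Sym}^{d-1}(\C^{2})$; taking traces with $\dim\mathrm{ran}(V)=2$ and $\dim\mathrm{Sym}^{d-1}(\C^{2})=d$ yields $\gamma=2/d$. Therefore $c=\tfrac12\Tr\F(M)=\tfrac1d\,\Tr[\Pi M]$.

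\emph{Step 3: evaluate $\Tr[\Pi M]$ by spin coherent states.} The overcomplete family of spin-$\tfrac12$ coherent states $\ket{\hat\Omega}$ underlying the symbol calculus of \cite{aah} gives $\Pi=d\int(\ketbra{\hat\Omega})^{\otimes d-1}\,d\Omega$ with $d\Omega=(4\pi)^{-1}\sin\theta\,d\theta\,d\phi$. Since $M$ factorizes over the ingoing sites, with $\langle\hat\Omega|\sigma_a|\hat\Omega\rangle=\pm\Omega_a$ (a coordinate of the unit vector, the signs immaterial since the exponents are even) and $\langle\hat\Omega|\mathbbm{1}|\hat\Omega\rangle=1$, one gets
\[
c=\frac1d\Tr[\Pi M]=\int_{S^{2}}\Omega_x^{\,k_1}\Omega_y^{\,k_2}\Omega_z^{\,k_3}\,d\Omega,
\]
which is automatically independent of $d$, as the statement requires. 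For even exponents this standard integral equals $(k_1-1)!!\,(k_2-1)!!\,(k_3-1)!!\big/(k_1+k_2+k_3+1)!!$, and using $(2m-1)!!=(2m)!/(2^{m}m!)$ one checks that, with $K:=k_1+k_2+k_3$, this equals $\dfrac{1}{K+1}\dbinom{K/2}{k_1/2,k_2/2,k_3/2}\big/\dbinom{K}{k_1,k_2,k_3}$, the asserted value.

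\emph{The main obstacle.} Everything above is soft except the structural claim in Step 2 — that spin $\tfrac12$ occurs in $\caP_{(d/2)}\otimes(\C^{2})^{\otimes d-1}$ exactly once and entirely over the symmetric part of the ingoing spins, so that $\Tr_{\mathrm{phys}}(VV^{*})\propto\Pi$; granting this, the rest is bookkeeping and one classical integral. A more hands-on alternative keeps the explicit decomposition $\F(M)=\sum_j W_jMW_j^{*}$: expanding $W_jMW_j^{*}$ through the matrix elements $\langle w_{k'}|M|w_k\rangle$ of $M$ between the symmetric states $\ket{w_k}$, the off-diagonal sums vanish by a spin-flip parity count (needing $k_1+k_2$ even) and the two diagonal sums coincide under the involution $\sigma_1^{\otimes(d-1)}$ (needing $k_2,k_3$ even); but one is then left to collapse a double binomial sum to the multinomial ratio, which is the genuinely laborious step that the symmetry argument of Steps 1--2 sidesteps.
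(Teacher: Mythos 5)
Your proof is correct, but it takes a genuinely different route from the one in the paper. The paper works directly with the explicit Kraus decomposition $\F(M)=\sum_j W_j M W_j^*$: it counts, in each spin sector, which configurations of the symmetric states $|w_k\rangle$ survive the spin flips encoded in $M$, and then collapses the resulting sum of binomial ratios with Vandermonde's identity. You instead (i) use $\mathrm{SU}(2)$ equivariance of $\F$ together with conjugation by the $\pi$-rotations about the coordinate axes to force $\F(M)\propto\id$ --- a step the paper only invokes informally ("by invariance") to transfer its $\sigma_1$ computation to $\sigma_2,\sigma_3$; (ii) identify $\Tr_{\mathrm{phys}}(VV^*)$ with $\tfrac{2}{d}\,\Pi$ via a multiplicity-one argument and Schur's lemma; and (iii) evaluate $\Tr[\Pi M]$ as a monomial moment on $S^2$ through the coherent-state resolution of the identity, which is precisely the classical-spin symbol calculus of \cite{aah} that the paper sets up in Section 1 but does not actually deploy here. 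All three steps check out: the Clebsch--Gordan argument for why spin $\tfrac12$ sits entirely over $\mathrm{Sym}^{d-1}(\C^2)$ with multiplicity one is sound, and the double-factorial moment $\prod_i(k_i-1)!!/(K+1)!!$ does equal the stated multinomial ratio. Your route makes the $d$-independence of the coefficient manifest (the sphere integral never sees $d$) and avoids the Vandermonde manipulation, at the cost of not treating the odd-$k_i$ cases uniformly: the paper's counting argument yields Theorem \ref{comb} by the same computation, whereas your Steps 2--3 only compute the trace against $\id$ and would require the first-moment analogue of the coherent-state identity to extract the $\sigma_i$ coefficients. One bookkeeping remark: your Step 2 assumes $V^*V=\id$, which is the normalization under which the theorem's $k_i=0$ case reads $\F(\id^{\otimes(d-1)})=\id$; the paper's literal "unnormalized" operator $\sum_j W_jYW_j^*$ satisfies $\F(\id^{\otimes(d-1)})=\tfrac{d+1}{2}\id$ by its own normalization check, so there is a constant-factor inconsistency internal to the paper's statement that both proofs silently resolve in the same way.
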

\begin{proof}
We provide an elementary proof by providing some intuition using the spin-flip operator $\sigma_1$.
First we assume $k:=k_1+k_2+k_3$ is even.
Starting with the case of $M=(\sigma_1)^{\otimes k}\otimes \mathbbm{1}^{\otimes(d-1-k)}$ we note that in order for the term $W_j M W_j^*$ will be nonzero for the number of terms in $W_j$ which have the same spin after flipping the first $k$ of them. This implies that the first $k$ of the total string of length $d$ are an equal number of ups and downs. Thus the number of ways to choose such a configuration with total number of $j$ spin ups is \begin{equation}{\binom{k}{k/2}}\binom{d-k}{j-k/2}\end{equation}
Noting that the normalization of $W_j$ is $\binom{d}{j}$ and summing over all possible values of $j$ we get that \begin{align*}\F(M)=\frac{1}{d+1}\sum_{j\geq k/2}^{d-k/2}\frac{{\binom{k}{k/2}}\binom{d-k}{j-k/2}}{{\binom{d}{j}}}\mathbbm{1}=\frac{1}{d+1}\sum_{j\geq k/2}^{d-k/2}\frac{{\binom{j}{k/2}}\binom{d-j}{k/2}}{{\binom{d}{k}}}\mathbbm{1}=\frac{1}{d+1}\frac{\binom{d+1}{k+1}}{\binom{d}{k}}\mathbbm{1}=\frac{1}{k+1}\mathbbm{1}\end{align*}
where we have rewritten the sum on the second line and used Vandermonde's identity to get a closed form for the sum, and divided by two since we have two total outputs $\langle\uparrow|$ and $\langle\downarrow|$.
A similar calculation shows that for $k$ odd we have \begin{equation}\F(M)=-\frac{1}{k+2}\sigma_1\end{equation}\par
By invariance these numbers will be the same for $\sigma_2$ and $
\sigma_3$ with the same output for odd with a different index.\par
For the case of  \begin{equation}M=(\sigma_1)^{\otimes k_1}\otimes(\sigma_2)^{\otimes k_2}\otimes (\sigma_3)^{\otimes k_3}\otimes(\mathbbm{1})^{\otimes d-1-k_1-k_2-k_3}\end{equation}
we make the same counting argument and notice that the number of configurations is \begin{equation}\sum_{j\geq (k_1+k_2+k_3)/2}^{d-(k_1+k_2+k_3)/2}\binom{k_1}{k_1/2}\binom{k_2}{k_2/2}\binom{k_3}{k_3/2}\binom{d-k_1-k_2-k_3}{j-k_1/2-k_2/2-k_3/2}\end{equation}
and so by normalization we have \begin{align*}\F(M)=\frac{1}{d+1}\sum_{j\geq (k_1+k_2+k_3)/2}^{d-(k_1+k_2+k_3)/2}\frac{\binom{k_1}{k_1/2}\binom{k_2}{k_2/2}\binom{k_3}{k_3/2}\binom{d-k_1-k_2-k_3}{j-k_1/2-k_2/2-k_3/2}}{{\binom{d}{j}}}\mathbbm{1}\\=\frac{1}{d+1}\sum_{j\geq (k_1+k_2+k_3)/2}^{d-(k_1+k_2+k_3)/2}\frac{\binom{\frac{k_1+k_2+k_3}{2}}{k_1/2,k_2/2,k_3/2}\binom{d-j}{k_1/2+k_2/2+k_3/2}\binom{j}{k_1/2,k_2/2,k_3/2}}{\binom{k}{k_1,k_2,k_3}\binom{d}{k_1+k_2+k_3}}\id\\=\frac{1}{k_1+k_2+k_3+1}\frac{\binom{\frac{k_1+k_2+k_3}{2}}{k_1/2,k_2/2,k_3/2}}{\binom{k_1+k_2+k_3}{k_1,k_2,k_3}}\mathbbm{1}\end{align*}
 where again we have used Vandermonde's identity
 \end{proof}
\begin{theorem}\label{comb}
    Letting  $k'_i:=k_i+k_i\mod 2$, if at most one $k_i$ is such that $k_i\neq k_i'$ then \begin{equation}\F(M)=\frac{1}{k'_1+k'_2+k'_3+1}\frac{\displaystyle\binom{\frac{k'_1+k'_2+k'_3}{2}}{\frac{k'_1}{2},\frac{k'_2}{2},\frac{k'_3}{2}}}{\displaystyle\binom{k'_1+k'_2+k'_3}{k'_1,k'_2,k'_3}}\prod_i (-\sigma_i)^{k'_i-k_i}\end{equation}
\end{theorem}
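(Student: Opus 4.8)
The plan is to run the counting argument from the proof of the preceding theorem, but now for a Pauli string $M$ whose colored part has odd size, so that $\F(M)$ comes out as a multiple of a single Pauli matrix rather than of $\id$; this is the general form of the remark there that $\F\big((\sigma_1)^{\otimes k}\otimes\id^{\otimes(d-1-k)}\big)=-\frac1{k+2}\sigma_1$ for odd $k$. First I reduce to one representative case. Since $V$ is an $\mathfrak{su}(2)$-intertwiner, $\F$ is $SU(2)$-equivariant for the simultaneous adjoint action on all the tensor factors, and the three Pauli matrices are permuted among themselves, up to signs that are immaterial here (the unchanged $k_i$ being even), by conjugation by suitable elements of $SU(2)$ — this is the same invariance already used in the preceding proof to interchange $\sigma_2$ and $\sigma_3$. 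The scalar on the right-hand side is symmetric in $k_1',k_2',k_3'$ and the Pauli factor $\prod_i(-\sigma_i)^{k_i'-k_i}$ transforms accordingly, so it suffices to prove the identity in one case; the subcase in which every $k_i$ is even is exactly the preceding theorem (there $\prod_i(-\sigma_i)^0=\id$), so I may assume exactly one $k_i$ is odd, say $k_1$, with $k_2,k_3$ even, and I set $k:=k_1+k_2+k_3$.

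Next I expand $\F(M)$ over the spin sectors exactly as in the preceding proof, as the suitably normalized sum over $\ell$ of the contributions $W_\ell M W_\ell^*$ with $W_\ell=SP_\ell$, and I evaluate each $W_\ell M W_\ell^*$ in the basis of symmetric states $|w_\ell\rangle$ defined above: the four entries $\langle\epsilon|W_\ell M W_\ell^*|\epsilon'\rangle$ with $\epsilon,\epsilon'\in\{\uparrow,\downarrow\}$ are, up to the normalizing factor $\binom{d}{\ell}$, the quantities $\langle w_\ell|M|w_\ell\rangle$, $\langle w_{\ell-1}|M|w_{\ell-1}\rangle$, $-\langle w_\ell|M|w_{\ell-1}\rangle$ and $-\langle w_{\ell-1}|M|w_\ell\rangle$. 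Now $M$ flips the $k_1+k_2$ spins carrying $\sigma_1,\sigma_2$, so a nonzero matrix element $\langle c|M|c'\rangle$ between basis configurations forces the numbers of up-spins of $c$ and $c'$ to differ by an odd amount (because $k_1+k_2$ is odd); hence the two diagonal quantities vanish, $\F(M)$ has no $\id$ or $\sigma_3$ component, and since $M$ is Hermitian while the phase contributed by the $\sigma_2$-block is $i^{k_2}=\pm1$ and that contributed by the $\sigma_3$-block is $\pm1$ (both $k_2,k_3$ even), the two surviving entries are equal real numbers, so $\F(M)$ is a real multiple of $\sigma_1$. Everything thus reduces to evaluating $\sum_\ell\binom{d}{\ell}^{-1}\langle w_\ell|M|w_{\ell-1}\rangle$; this is the step in which "at most one odd $k_i$" is essential, since with two odd indices the flipped block is even again, the diagonal terms survive, and the output would be a different Pauli combination.

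Finally I compute this sum. A nonzero term of $\langle w_\ell|M|w_{\ell-1}\rangle$ factors as a signed count over the flipped $(\sigma_1,\sigma_2)$-block (independent of $\ell$) times a signed count of up-configurations over the remaining $\sigma_3$ and identity sites (depending on $\ell$); a short generating-function computation identifies the first factor with $(-1)^{k_2/2}$ times the coefficient of $x^{(k_1+k_2-1)/2}$ in $(1+x)^{k_1}(1-x)^{k_2}$. Because the flipped block has odd size this is a "near-balanced" count, and the one new combinatorial ingredient compared with the preceding theorem is the Pascal relation $\binom{k_1}{(k_1-1)/2}+\binom{k_1}{(k_1+1)/2}=\binom{k_1+1}{(k_1+1)/2}=\binom{k_1'}{k_1'/2}$, which is what promotes $k_1$ to $k_1'$. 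The $\ell$-dependent factor, together with the $\binom{d}{\ell}^{-1}$ and the overall normalization, is then handled as in the preceding proof — the rewriting of binomial ratios used there followed by Vandermonde's identity collapses the $\ell$-sum to a closed form — and after the same multinomial simplifications the product lands on $\dfrac{1}{k_1'+k_2'+k_3'+1}\dfrac{\binom{(k_1'+k_2'+k_3')/2}{k_1'/2,\,k_2'/2,\,k_3'/2}}{\binom{k_1'+k_2'+k_3'}{k_1',\,k_2',\,k_3'}}$; restoring the sign and undoing the choice of odd index via the equivariance above produces the Pauli factor $\prod_i(-\sigma_i)^{k_i'-k_i}$, which is the assertion. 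I expect the main obstacle to be precisely this last piece of bookkeeping: carrying the phases of the even-order $\sigma_2$ and $\sigma_3$ blocks so that the total sign is $-\sigma_{i_0}$ and not $+\sigma_{i_0}$, and checking that the near-balanced count recombines with the Vandermonde sum into the primed multinomial. The remainder is a direct adaptation of the argument already given.
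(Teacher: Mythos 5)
Your proposal is correct and follows essentially the same route as the paper, whose proof of this theorem is simply a one-line deferral to the counting argument of the preceding theorem ("similar as above, using Vandermonde's identity"): you expand over the spin sectors via $W_\ell=SP_\ell$, factor the matrix elements $\langle w_\ell|M|w_{\ell-1}\rangle$ into a block count times an $\ell$-dependent count, and collapse the $\ell$-sum with Vandermonde. If anything you supply details the paper suppresses — the equivariance reduction to one odd index, the parity argument killing the diagonal entries, and the \emph{signed} generating-function form of the block count (which is the correct version of the unsigned count $\binom{k_1}{k_1/2}\binom{k_2}{k_2/2}\binom{k_3}{k_3/2}$ appearing in the paper's earlier proof) — and I verified that your outline reproduces the stated coefficients in the cases $(k_1,k_2,k_3)=(1,0,0),(3,0,0),(1,2,0)$.
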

\begin{proof}
    Similar as above, using Vandermonde's identity.
\end{proof}
Note that because of normalization the coefficients in for the transfer operator are not dependent on the degree of the site $d$, only on the number of occupied edges.
\par
We will choose a simple boundary condition to help understand how the concatenations of these operators work on trees, for which the product boundary condition is an invariant subspace of $\F$.
Denote the vector $\mathbf{x}=[x_1,x_2,x_3]$ with $||\mathbf{x}||\leq 1$ and denote $\boldsymbol{\sigma}=[\sigma_1,\sigma_2,\sigma_3]$ We will explore the action of the transfer operator on the vector-parametrized boundary condition \begin{equation}B(\mathbf{x})=(\mathbbm{1}+\mathbf{x}\cdot\boldsymbol{\sigma})^{\otimes (d-1)}\end{equation}
We evaluate \begin{align*}\Tr(\F(B(\mathbf{x})))=\sum_{\substack{ k_i\mathrm{even}}}\binom{d-1}{k_1,k_2,k_3,d-1-k}\cdot \frac{\displaystyle\binom{\frac{k_1+k_2+k_3}{2}}{\frac{k_1}{2},\frac{k_2}{2},\frac{k_3}{2}}}{\displaystyle\binom{k_1+k_2+k_3+1}{k_1,k_2,k_3}}\cdot x_1^{k_1}x_2^{k_2}x_3^{k_3}\\=\sum_{\substack{ k_i\:\mathrm{even}}}\frac{1}{k+1}\binom{d-1}{k}\cdot \displaystyle\binom{\frac{k_1+k_2+k_3}{2}}{\frac{k_1}{2},\frac{k_2}{2},\frac{k_3}{2}}\cdot x_1^{k_1}x_2^{k_2}x_3^{k_3}=\sum_{k\:\mathrm{even}}\frac{1}{k+1}\binom{d-1}{k}\cdot||\mathbf{x}||^k\\=\frac{\left(1+||\mathbf{x}||\right)^d-\left(1-||\mathbf{x}||\right)^d}{2d\cdot||\mathbf{x}||}\end{align*}
We will call this function $f_d(\mathbf{x})=f_d(||\mathbf{x}||)$ for convenience.
\par
Similarly we have \begin{align*}\Tr(\F(B(\mathbf{x}))\cdot\sigma_i)=-\sum_{\substack{k_1+k_2+k_3=k \\ k_1\mathrm{odd}, k_2,k_3\mathrm{even}}}\binom{d-1}{k_1,k_2,k_3,d-2-k}\cdot \frac{\displaystyle\binom{\frac{k_1+1+k_2+k_3}{2}}{\frac{k_1+1}{2},\frac{k_2}{2},\frac{k_3}{2}}}{\displaystyle\binom{k_1+k_2+k_3}{k_1+1,k_2,k_3}}\cdot x_1^{k_1}x_2^{k_2}x_3^{k_3}\\=-\sum_{\substack{k_1+k_2+k_3=k\neq 0 \\ k_i\mathrm{even}}}\binom{d-1}{k_1-1,k_2,k_3,d-k}\cdot \frac{\displaystyle\binom{\frac{k_1+k_2+k_3}{2}}{\frac{k_1}{2},\frac{k_2}{2},\frac{k_3}{2}}}{\displaystyle\binom{k_1+k_2+k_3}{k_1,k_2,k_3}}\cdot x_1^{k_1-1}x_2^{k_2}x_3^{k_3}\\=-\sum_{\substack{k_1+k_2+k_3=k\neq 0 \\ k_i\mathrm{even}}}\frac{k_1}{d}\binom{d}{k_1,k_2,k_3,d-k}\cdot \frac{\displaystyle\binom{\frac{k_1+k_2+k_3}{2}}{\frac{k_1}{2},\frac{k_2}{2},\frac{k_3}{2}}}{\binom{k_1+k_2+k_3}{k_1,k_2,k_3}}\cdot x_1^{k_1-1}x_2^{k_2}x_3^{k_3}\\=-\frac{1}{d}\frac{\partial}{\partial x_i}{f_{d+1}}(\mathbf{x})=-\frac{x_i}{d||\mathbf{x}||}f'_{d+1}(||\mathbf{x}||)\end{align*}
Lastly we note that our trace normalized operator $\tilde{\F}$ evaluates as \begin{align}||\tilde{\F}(B(\mathbf{x}))-\mathbbm{1}||=-F_d(||\mathbf{x}||):=\left\lvert\frac{f'_{d+1}(||\mathbf{x}||)}{df_d(||\mathbf{x}||)}\right\rvert\\=\frac{(d+1)\cdot||\mathbf{x}||\cdot\left((1+||\mathbf{x}||)^d+(1-||\mathbf{x}||)^d\right)-\left((1+||\mathbf{x}||)^{d+1}-(1+||\mathbf{x}||)^{d+1}\right)}{(d+1)||\mathbf{x}||\cdot \left((1+||\mathbf{x}||)^d-(1-||\mathbf{x}||)^d\right)}\\=\frac{1}{d+1}\left(d\cdot\coth(d\cdot \tanh^{-1}(||\mathbf{x}||))-\frac{1}{||\mathbf{x}||}\right)\end{align}
\subsection{The transfer function $F_d(t)$}
We analyze the transfer function $F_d(t)$
\begin{theorem}\label{function}
    For $d\geq 2$ we have the following properties of \begin{equation}F_d(t)=-\frac{1}{d+1}\left(d\cdot\coth(d\cdot\tanh^{-1}(t))-\frac{1}{t}\right)\end{equation}
    \begin{enumerate}
    \item $F_d(t)$ is odd and $F_d(0)=0$
    \item $||F_d(t)||<1$
    \item $F_d(t)$ is analytic
    \item $F_d'(t)>0$
    \item $F_d''(t)<0$ on $(0,1]$ and $F_d''(t)>0$ on $[-1,0)$.
    \item $F_d'(0)=\frac{1-d}{3}$
    \item \begin{equation}-\left(\frac{d-1}{3}\right)\cdot t\leq F_d(t)\leq-\left(\frac{3}{(d-1)t}+1\right)^{-1}\end{equation}
    \end{enumerate}
\end{theorem}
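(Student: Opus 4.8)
The plan is to trade the transcendental expression for $F_d$ for two equivalent closed forms and then read each property off whichever is convenient. From $\coth z=\tfrac{e^{2z}+1}{e^{2z}-1}$ and $e^{2\tanh^{-1}t}=\tfrac{1+t}{1-t}$ one gets the rational form
\[
F_d(t)=-\frac{1}{d+1}\cdot\frac{dt\big((1+t)^{d}+(1-t)^{d}\big)-\big((1+t)^{d}-(1-t)^{d}\big)}{t\big((1+t)^{d}-(1-t)^{d}\big)},
\]
and the substitution $t=\tanh s$ (so $\tfrac1t=\coth s$) gives the hyperbolic form $F_d(t)=-\tfrac{1}{d+1}\,g(s)$ with $g(s):=d\coth(ds)-\coth s$. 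Item (1) is visible from either form: $t\mapsto -t$ exchanges $(1+t)^d$ and $(1-t)^d$, equivalently $g$ is odd because $\coth$ is; and $F_d(0)=0$ because in the rational form the numerator vanishes to order $t^{3}$ while the denominator vanishes only to order $t^{2}$. After cancelling that common $t^{2}$ the denominator becomes $\tfrac{(1+t)^d-(1-t)^d}{t}$, an even polynomial which is strictly positive on all of $\mathbb R$ (since $1+t>|1-t|$ for $t>0$ forces $(1+t)^d>(1-t)^d$), so $F_d$ is a ratio of polynomials with non-vanishing denominator on $[-1,1]$, hence real-analytic there — item (3). For item (6) I would expand $g$ near $0$ with $\coth x=\tfrac1x+\tfrac x3+O(x^{3})$, getting $g(s)=\tfrac{d^{2}-1}{3}s+O(s^{3})$, whence $F_d'(0)=-\tfrac1{d+1}g'(0)=\tfrac{1-d}{3}$.

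The key analytic input for the remaining items is the elementary inequality $\sinh(ds)\ge d\sinh s$ for $s\ge0$, $d\ge2$, which holds because $\psi(s):=\sinh(ds)-d\sinh s$ satisfies $\psi(0)=0$ and $\psi'(s)=d(\cosh(ds)-\cosh s)\ge0$. From it, $g'(s)=\operatorname{csch}^{2}s-d^{2}\operatorname{csch}^{2}(ds)=\tfrac1{\sinh^{2}s}-\tfrac{d^{2}}{\sinh^{2}(ds)}\ge 0$, strictly for $s>0$; so $g$ increases on $(0,\infty)$ from $g(0^{+})=0$ to $\lim_{s\to\infty}g(s)=d-1$, giving $0\le -(d+1)F_d(t)<d-1$, i.e. $|F_d(t)|\le\frac{d-1}{d+1}<1$ on $[-1,1]$ (the endpoints $t=\pm1$ realize $\tfrac{d-1}{d+1}$) — item (2). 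Then the chain rule $F_d'(t)=-\tfrac1{d+1}g'(\tanh^{-1}t)\tfrac1{1-t^{2}}$ shows $F_d'$ has constant sign on $(-1,1)$, the sign of $F_d'(0)=\tfrac{1-d}{3}$; at the endpoints one checks $g'(s)\sim 4e^{-2s}$ and $\tfrac1{1-t^{2}}=\cosh^{2}s\sim\tfrac14 e^{2s}$, giving $F_d'(\pm1)=-\tfrac1{d+1}$, which settles item (4).

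Item (5) comes from a further differentiation: with $\tfrac{d}{dt}=\cosh^{2}s\,\tfrac{d}{ds}$ one finds $F_d''(t)=-\tfrac1{d+1}\cosh^{2}s\cdot\tfrac{d}{ds}\!\big(g'(s)\cosh^{2}s\big)$ and $g'(s)\cosh^{2}s=\coth^{2}s-\tfrac{d^{2}\cosh^{2}s}{\sinh^{2}(ds)}$, so on $(0,1]$ the sign of $F_d''$ is opposite to that of $\tfrac{d}{ds}\big(g'(s)\cosh^{2}s\big)$. Differentiating and clearing denominators (using $d\cosh s\cosh(ds)-\sinh s\sinh(ds)=(d-1)\cosh s\cosh(ds)+\cosh((d-1)s)$, from $\cosh(a-b)=\cosh a\cosh b-\sinh a\sinh b$) reduces the claim to
\[
\sinh^{3}(ds)\ \ge\ d^{2}\sinh^{3}s\,\big((d-1)\cosh s\,\cosh(ds)+\cosh((d-1)s)\big),\qquad s\ge0,
\]
which, after the Chebyshev substitution $\cosh(ms)=T_m(\cosh s)$, $\sinh(ms)=\sinh s\,U_{m-1}(\cosh s)$, becomes a polynomial inequality in $w=\cosh^{2}s\ge 1$ with a double zero at $w=1$ for $d\ge3$ (an identity for $d=2$, where $F_2(t)\equiv-t/3$); I would finish it by extracting the $(w-1)^{2}$ factor and checking the cofactor is nonnegative for $w\ge1$, as happens for small $d$ — e.g. for $d=3$ the inequality is exactly $8(w-1)^{2}(8w+1)\ge0$. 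Finally, item (7): the lower bound is immediate once (4)–(6) hold, because $F_d''>0$ on $(0,1]$ makes $F_d'$ increasing there, so $F_d'(t)\ge F_d'(0)=-\tfrac{d-1}{3}$, and integrating from $0$ gives $F_d(t)\ge-\tfrac{d-1}{3}t$; the upper bound is equivalent, taking reciprocals, to $\tfrac1{-F_d(t)}-\tfrac{3}{(d-1)t}\le1$ on $(0,1]$, which via the rational form is the polynomial inequality
\[
\big((1+t)^{d}-(1-t)^{d}\big)\big((d^{2}-1)t^{2}+(d-1)t+3\big)\ \le\ dt\big(3+(d-1)t\big)\big((1+t)^{d}+(1-t)^{d}\big),
\]
whose two sides differ by $t^{k}$ times a polynomial that is positive on $[0,1]$ (for instance $16t^{4}(2-t)$ when $d=3$).

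The step I expect to be the real obstacle is the uniform-in-$d$ verification of the two displayed polynomial inequalities — the one coming from convexity in (5) and the one coming from the upper half of (7). For each fixed $d$ these are finite polynomial inequalities with a multiple root at the relevant endpoint ($w=1$, resp. $t=0$) and factor cleanly, but turning the factorizations into a single argument valid for all $d\ge2$ will require some bookkeeping with Chebyshev polynomials; everything else is a formal consequence of the two closed forms together with the one elementary inequality $\sinh(ds)\ge d\sinh s$.
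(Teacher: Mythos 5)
Your two closed forms together with the elementary inequality $\sinh(ds)\ge d\sinh s$ do give clean, self-contained proofs of (1), (2), (3) and (6) — more than the paper offers, since it simply cites \cite{fnw} for everything except the right-hand bound in (7). Your computation also exposes a sign inconsistency in the statement itself: you correctly find $F_d'(0)=\tfrac{1-d}{3}<0$ and that $F_d'$ has constant sign on $(-1,1)$, so items (4) and (5) as literally written contradict (6) and (7); the intended assertions are about $-F_d(t)=||\tilde{\F}(B(\mathbf{x}))-\id||$, which is increasing and concave on $(0,1]$. You should say this explicitly rather than declare that a proof of $F_d'<0$ "settles item (4)" — and note that your own derivation of the lower bound in (7) quietly uses $F_d''>0$ on $(0,1]$, the opposite of the stated (5).

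The genuine gap is exactly where you flag it. For (5) and for the right-hand inequality in (7) you reduce to polynomial inequalities in $w=\cosh^2 s$ (resp.\ in $t$) and verify them only for $d=2,3$; the deferred "bookkeeping with Chebyshev polynomials" is the entire content of the claim for $d\ge4$, so as written the proposal does not prove the theorem in the range that matters. This is not a cosmetic omission: the right-hand bound in (7) is the one part the paper proves in detail, and it must hold uniformly in $d$ because it is iterated along arbitrary degree sequences in Section 5. The paper's route sidesteps your obstacle entirely: it expands $d\coth(dx)$ and $\coth(x)$ by the continued fraction for $\coth$, subtracts, and truncates — the discarded tail is manifestly positive and bounded by $1$ term by term, which yields $F_d(\tanh x)^{-1}\ge -\left(\frac{3}{(d-1)x}+1\right)$ for all $d\ge2$ in a few lines with no case analysis. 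To complete your version you would either need to carry out the $(w-1)^2$ (resp.\ $t^4$) factorization for general $d$ and prove the cofactor nonnegative — which does not look easier than the original inequality — or import the continued-fraction truncation as the paper does.
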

    \begin{proof}
    All of these except the right-hand side the last inequality are given in, or follow quickly from those in \cite{fnw}; the right-hand side of g) follows from the continued fraction expansion for $\coth(t)$ \cite{wall}. Note that \begin{equation}d\coth(dx)=\frac{1}{x}+\frac{d^2 x}{3+\frac{d^2x^2}{5+\frac{d^2x^2}{7+\frac{d^2 x^2}{\ddots}}}}\end{equation} and $F_d(\tanh(x))\leq F_d(x)$ so
\begin{align*}
    F_d(x)^{-1}\geq F_d(\tanh(x))^{-1}=-\left(\frac{1}{d+1}\left(d\coth(dx)-\coth(x)\right)\right)^{-1}&\\=-(d+1)\left(\frac{d^2 x}{3+\frac{d^2x^2}{5+\frac{d^2x^2}{7+\frac{d^2 x^2}{\ddots}}}}-\frac{x}{3+\frac{x^2}{5+\frac{x^2}{7+\frac{x^2}{\ddots}}}}\right)^{-1}\\\geq-\frac{3}{(d-1) x}-\frac{1}{(d-1)}\frac{d^2x}{5+\frac{d^2x^2}{7+\frac{d^2x^2}{\ddots}}}\geq-\left(\frac{3}{(d-1)x}+1\right)
\end{align*}
    \end{proof}
The last inequality allows us to bound iterates of these functions.
This already provides us with enough to give a non-uniqueness condition for Cayley trees, which duplicates the result of \cite{fnw}.
\begin{section}{Cayley trees: two proof methods}
Our tree in this case is an infinite tree $T^d=\{\caV_{T^d},\caE_{T^d}\}$ where for all $x\in\caV_{T^d}$ we have $\deg(x)=d$. $T^d$ has a distinguished root $\overline{0}$ and we can define the set of transfer operators on our volumes $T^d_n$ as \begin{equation}\tilde{\F}_{n,d}:M_2^{\otimes |\partial T^d_n|}\to M_2^{\otimes |\partial T^d_{n-1}|}\end{equation} where $\tilde{\F}_{n,d}=\tilde{\F}_d^{\otimes (d-1)^{n}}$ where $\tilde{\F}_d: M_2^{\otimes d-1}\to M_2$ is the single site transfer operator analyzed in the previous section. We provide two proofs.\par
\begin{figure}[t]
\begin{tikzpicture}[level distance=1.5cm]
    \node[circle, draw, fill=white, minimum size=5pt, inner sep=0pt] (root) at (0, 0) {0};

    \foreach \i in {1, 2, 3, 4, 5} {
        \node[circle, draw, fill=white, minimum size=5pt, inner sep=0pt] (child\i) at (\i*72:1.5) {};
        \draw (root) -- (child\i);
        
        \foreach \j in {1, 2, 3, 4} {
            \node[circle, draw, fill=white, minimum size=5pt, inner sep=0pt] (grandchild\i\j) at (\i*72 + \j*15:3) {};
            \draw (child\i) -- (grandchild\i\j);
        }
    }

\end{tikzpicture}
\caption{First two layers of Cayley tree of degree $d=5$}
\end{figure}
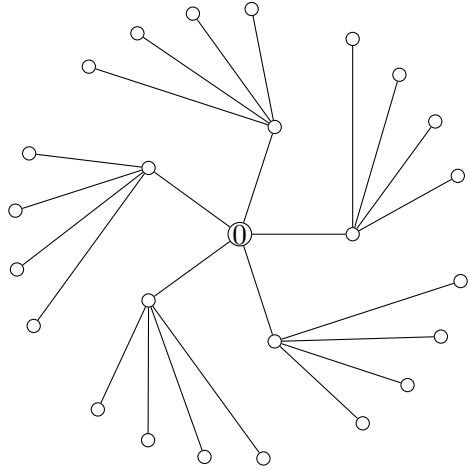
\begin{theorem}
  For $d\geq 5$, the AKLT ground state is not unique.  
\end{theorem}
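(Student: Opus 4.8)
The plan is to reduce the question to a one-dimensional dynamical system for the Bloch-vector length of a product boundary condition, and then to exploit that for $d\ge 5$ the origin is a repelling fixed point of that system. Fix a unit vector $\hat{\mathbf e}\in\bR^3$ and, for $t\in[0,1]$, impose the boundary condition $B(t\hat{\mathbf e})=(\mathbbm 1+t\,\hat{\mathbf e}\cdot\boldsymbol\sigma)^{\otimes(d-1)}$ at every vertex of $\partial T^d_n$. By the $SU(2)$ symmetry of the AKLT model and the fact (Section 2) that the product boundary conditions form an invariant subspace of $\tilde\F_d$, applying $\tilde\F_d$ at each vertex of a layer sends a product condition with common Bloch vector $\mathbf x$ to a product condition with common Bloch vector $\mathbf x'=F_d(\lVert\mathbf x\rVert)\,\mathbf x/\lVert\mathbf x\rVert$, up to a positive scalar that cancels in every normalized expectation; note $\mathbf x'$ is antiparallel to $\mathbf x$ (this sign flip is the antiferromagnetic alternation) and $\lVert\mathbf x'\rVert=\lvert F_d(\lVert\mathbf x\rVert)\rvert$. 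Hence propagating $B(\mathbf x)^{\otimes\lvert\partial T^d_n\rvert}$ inward from layer $n+1$ to layer $1$ replaces $t=\lVert\mathbf x\rVert$ by $\mathcal G^{\circ n}(t)$, where $\mathcal G:=\lvert F_d\rvert$, and replaces $\hat{\mathbf e}$ by $(-1)^n\hat{\mathbf e}$.

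I would then analyze $\mathcal G(t)=\frac{1}{d+1}\bigl(d\coth(d\tanh^{-1}t)-\tfrac1t\bigr)$ on $[0,1]$ via Theorem \ref{function}: $\mathcal G$ is analytic, strictly increasing, and concave, with $\mathcal G(0)=0$, $\mathcal G'(0)=\tfrac{d-1}{3}$, and $\mathcal G(1)=\tfrac{d-1}{d+1}<1$. For $d\ge 5$ one has $\mathcal G'(0)=\tfrac{d-1}{3}\ge\tfrac43>1$, so $h(t):=\mathcal G(t)-t$ is concave with $h(0)=0$, $h'(0)>0$, $h(1)<0$; therefore $h$ has a unique zero $t^*_d\in(0,1)$, $\mathcal G$ maps each of $[0,t^*_d]$ and $[t^*_d,1]$ into itself monotonically, $\mathcal G^{\circ n}(t)\to t^*_d$ for every $t\in(0,1]$, and the origin is repelling. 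This is precisely where the hypothesis $d\ge 5$, i.e.\ $d-1>3$, is used; for $d\le 4$ concavity of $\mathcal G$ and $\mathcal G'(0)\le 1$ give $\mathcal G(t)<t$ on $(0,1]$, the iterates collapse to $0$, and no such fixed point exists.

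To conclude non-uniqueness it then suffices to produce two convergent sequences of boundary conditions with distinct limits. Let $B^{(\hat{\mathbf e})}_n:=B\bigl((-1)^n t^*_d\hat{\mathbf e}\bigr)^{\otimes\lvert\partial T^d_n\rvert}$. By the first paragraph, propagating this condition down to the support of any fixed local observable yields, for all large $n$, an effective boundary condition that does not depend on $n$ --- the alternating sign of $B^{(\hat{\mathbf e})}_n$ compensates that of the propagation, using $\mathcal G(t^*_d)=t^*_d$ --- so the infinite-volume state $\omega^{(\hat{\mathbf e})}:=\lim_n\omega^n_{B^{(\hat{\mathbf e})}_n}$ exists, and $\omega^{(\hat{\mathbf e})}(S^z_{\overline 0})$ equals the root-spin expectation with all $d$ incident bonds carrying $\mathbbm 1+t^*_d\hat{\mathbf e}\cdot\boldsymbol\sigma$. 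By $SU(2)$ symmetry this is $m(t^*_d)\,(\hat{\mathbf e}\cdot\hat{\mathbf e}_3)$ for a function $m$ odd in $t$ with $m'(0)\ne 0$ (polarizing one bond of the degree-$d$ root strictly polarizes its symmetric spin, a short computation with the root transfer data of Section 2). Taking $\hat{\mathbf e}=\hat{\mathbf e}_3$ and $\hat{\mathbf e}=\hat{\mathbf e}_1$ gives $\omega^{(\hat{\mathbf e}_3)}(S^z_{\overline 0})=m(t^*_d)$ versus $\omega^{(\hat{\mathbf e}_1)}(S^z_{\overline 0})=0$, which differ as soon as $m(t^*_d)\ne 0$; equivalently, the persistence of the length $t^*_d>0$ at every layer means $\omega^{(\hat{\mathbf e})}$ carries a nonzero staggered magnetization, so it is not $SO(3)$-invariant and in particular differs from the state obtained from the trivial boundary $t=0$. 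This is the long-range antiferromagnetic order.

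The step I expect to be the genuine obstacle is the last one --- upgrading $m\not\equiv 0$ to $m(t^*_d)\ne 0$, i.e.\ verifying that the explicit expression for the root expectation (rational in $t$ after the substitution $t=\tanh x$) obtained by contracting the Section 2 formulas at the degree-$d$ root against $(\mathbbm 1+t^*_d\hat{\mathbf e}_3\cdot\boldsymbol\sigma)^{\otimes d}$ does not accidentally vanish at this particular value. I would handle it by proving $m(t)>0$ for all $t\in(0,1)$, using the monotonicity of $\mathcal G$ together with the two-sided bound $-\tfrac{d-1}{3}t\le F_d(t)\le-(\tfrac{3}{(d-1)t}+1)^{-1}$ of Theorem \ref{function} to control the multinomial sums. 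Finally, the argument above is the dynamical proof; a second proof, the one that generalizes to irregular trees, avoids the fixed point and uses only that $\mathcal G'(0)>1$ forces $\liminf_n\mathcal G^{\circ n}(t_0)>0$ for each fixed $t_0\in(0,1)$, which already separates the limits arising from the $t_0$-boundary and the $0$-boundary on a suitable correlation function.
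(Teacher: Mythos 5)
Your proposal is correct and follows essentially the same route as the paper: it is the paper's Proof 1 (find the nonzero solution of $F_d(t)=-t$, which exists precisely because $F_d'(0)=\tfrac{1-d}{3}<-1$ for $d\ge 5$, and feed the corresponding product boundary conditions into the finite-volume states), with the paper's Proof 2 (the iterate lower bound from Theorem \ref{function}(g)) sketched at the end. The only differences are refinements of detail --- you use alternating boundary conditions to make the propagated condition literally $n$-independent, and you flag the need to check that the nontrivial effective boundary condition actually changes some local expectation, a point the paper asserts without comment.
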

\begin{proof}[Proof 1]
    Note $\tilde{\F}_{n,d}=\tilde{\F}_d^{\otimes (d-1)^{n}}$ and $\tilde{\F}_d(B(\mathbf{x}))=\id+F_d(||\mathbf{x}||)\cdot\frac{\mathbf{x}}{||\mathbf{x}||}$ if $||\mathbf{x}||\neq 0$ and $\tilde{\F}_d(B(\mathbf(0)))=\id$. Thus if there exists a value $t_0\in[0,1]$ with $F_d(t_0)=\pm t_0$ then $\tilde{\F}_d(B(\mathbf{x}))-\id=\pm\mathbf{x}$ for all $\mathbf{x}$ with $||\mathbf{x}||=t_0$. Defining the limiting boundary conditions $B_n(\mathbf{x})=(\id+\mathbf{x}\cdot\sigma)^{\otimes \partial T_n}$ we have for some $A\in\cA$ with $\supp(A)=\{\overline{0}\}$ we have \begin{align*}
        \langle\psi_n(B_n(\mathbf{x})),A\psi_n(B_n(\mathbf{x}))\rangle=\langle\psi_n(\tilde{\F}^{\circ n}(B_n(\mathbf{x})),A\psi_n(\tilde{\F}^{\circ n}(B_n(\mathbf{x}))\rangle\\=\langle\psi_0(B_1(\pm\mathbf{x})),A\psi_0(B_1(\pm\mathbf{x}))\rangle\neq\langle\psi_0(\id),A\psi_0(\id)\rangle
    \end{align*}
    since $||x||=t_0\neq 0$.
    Thus we need to know if $F_d(t_0)=\pm t_0$ has a solution on $[0,1]$. 
    We have $F_d(0)=0$ and $F_d(t)\leq 0$ on $[0,1]$ and $|F_d(1)|\leq 1$; thus we will have a solution to $F_d(t)=-t$ if $F_d'(0)=\frac{1-d}{3}<-1$ which occurs exactly when $d\geq 5$.
\end{proof}
\begin{proof}[Proof 2]
    Alternatively, notice that the bound $|F_d(x)|\leq\left(\frac{3}{(d-1)x}+1\right)^{-1}$ we have that \begin{equation}|\bigcirc_{i=1}^{n}F_{d}(x)|\geq\left(\frac{1}{x}\prod_{i=1}^{n}\frac{3}{d_i-1}+1+\sum_{k=1}^{n-1}\prod_{i=1}^{k}\frac{3}{d_i-1}\right)^{-1}\end{equation}
    Thus for a fixed $x\in(0,1]$ we have \begin{equation}\lim_{n\to\infty}|\bigcirc_{i=1}^{n}F_{d}(x)|\geq\lim_{n\to\infty}\left(\frac{1}{x}\prod_{i=1}^{n}\frac{3}{d_i-1}+1+\sum_{k=1}^{n-1}\prod_{i=1}^{k}\frac{3}{d_i-1}\right)^{-1}=1-\frac{3}{d-1}\end{equation} which is greater than $0$ when $d\geq 5$. Thus $x_d:=\lim_{n\to\infty}||\tilde{\F}_{T_n}(t_0)-\id||>0$ so the sequences of boundary conditions $B_n(t_0)$ and $B_n(0)$ lead to different infinite volume expectations.
\end{proof}
These two proofs lead to two different ways to generalize the result. The first proof will generalize to graphs which are not necessarily trees, which we can analyze using a graphical or tensor-network expression for the transfer operator; we explain this in the next section. The second proof generalizes to irregular trees with some nice features, which we analyze in Section $5$.
\end{section}
\begin{section}{Tree-like graphs: the graphical picture}
When our graph $T$ is easily constructed via tree-like concatenations of finite volume graphs $\Lambda$ (as the Cayley tree is constructed as a treelike concatenation of the single site graphs) one can leverage the formulae to obtain information on the ground state. We will give a general formula for some of the quantities we have used so far.\par
Let $\Lambda=(\caV_{\Lambda},\caE_{\Lambda})$ be a finite bipartite connected graph and let us define the \textbf{root} $\overline{0}\in\caV_{\Lambda}$ with $\deg(\overline{0})=1$ and the \textbf{boundary} $\partial\Lambda\subset\caV_{\Lambda}$ of $\Lambda$. We construct the \textbf{tree with cell} $\Lambda$
with boundary $\partial\Lambda$ in the following way. Let $d=|\partial\Lambda|+1$ and $\partial\Lambda=\{x_1,...,x_{d-1}\}$ and let $T_d=(\caV_{T_d},\caE_{T_d})$ be the Cayley tree of degree $d$. To each $x\in\caV_{T_d}$ associate a copy $\Lambda^x$ of $\Lambda$. 
Now to each neighbor $N_x=\{y\in\caV_{T_d}\:|\:(y,x)\in\caE_{T_d}\}=\{y^x_1,...,y^x_{d-1}\}$ we add an edge $e_{y^x_k}=(x_k,\overline{0}_{y^x_k})$. Thus we get a treelike graph where to each point in our boundary of $\Lambda$ we have a connection to another copy of $\Lambda$ attached to the root $\overline{0}$.
\begin{figure}[t]
\begin{tikzpicture}[scale=1.5, node distance=1.5cm, every node/.style={circle, fill, inner sep=1.5pt}]

\node (a) at (0, 0) {};
\node[label={[label distance=1mm]315:1}] (b) at (1, 0) {};
\node[label={[label distance=1mm]45:2}] (c) at (1, 1) {};
\node[label={[label distance=1mm]135:3}] (d) at (0, 1) {};
\node[label={[label distance=1mm]135:0}]  (e) at (-.5, -0.5) {};

\draw[thick] (a) -- (b) -- (c) -- (d) -- (a);

\draw[thick] (a) -- (e);

\end{tikzpicture}\hspace{2cm}
\begin{tikzpicture}[scale=1.5, node distance=1.5cm, every node/.style={circle, fill, inner sep=1.5pt}]
\begin{scope}[xshift=3.5cm]

\node (a2) at (0, 0) {};
\node (b2) at (1, 0) {};
\node (c2) at (1, 1) {};
\node (d2) at (0, 1) {};
\node (e2)[label={[label distance=1mm]135:0}] at (-.5, -.5) {};
\draw[thick] (a2) -- (b2) -- (c2) -- (d2) -- (a2);
\draw[thick] (a2) -- (e2);

\node (topa) at (2.5, 1.5) {};
\node (topb) at (2.5, 2.5) {};
\node (topc) at (1.5, 2.5) {};
\node (topd) at (1.5, 1.5) {};
\draw[thick] (topa) -- (topb) -- (topc) -- (topd) -- (topa);
\draw[thick] (c2) -- (topd);

\node (topa) at (-.5, 1.5) {};
\node (topb) at (-.5, 2.5) {};
\node (topc) at (-1.5, 2.5) {};
\node (topd) at (-1.5, 1.5) {};
\draw[thick] (topa) -- (topb) -- (topc) -- (topd) -- (topa);
\draw[thick] (d2) -- (topa);

\node (topa) at (2.5, -.5) {};
\node (topb) at (2.5, -1.5) {};
\node (topc) at (1.5, -1.5) {};
\node (topd) at (1.5, -.5) {};
\draw[thick] (topa) -- (topb) -- (topc) -- (topd) -- (topa);
\draw[thick] (b2) -- (topd);
\end{scope}
\end{tikzpicture}
\caption{First and second layers of the quasi-Cayley tree generated by the cell on the left with root $0$ and boundary $\{1,2,3\}$.}
\label{fig:quasicayley}
\end{figure}
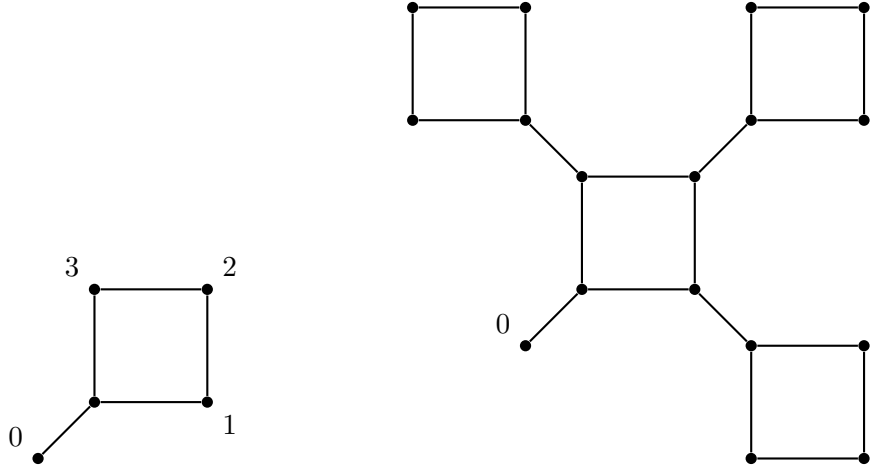
We can define the transfer operator $\tilde{\F}_{\Lambda}: M_2^{\otimes \partial \Lambda}\to M^{\{\overline{0}\}}_2$ and define the modified boundary condition \begin{equation}B(\mathbf{x})=\bigotimes_{y\in\partial \Lambda}(\id +(-1)^{\chi_A(y)}\mathbf{x}\cdot\sigma)\end{equation} where \begin{equation}\chi_A(y)=\begin{cases} 
      1 & y\in \caV_{\Lambda}^A\\
      0 & y\in \caV_{\Lambda}^B
   \end{cases}\end{equation}
where $\caV_{\Lambda}^A$ and $\caV_{\Lambda}^B$ are the bipartition of $\Lambda$.
In order to give an explicit formula for the values of the transfer operator acting on these boundary conditions we will need to describe a set of labeled loop diagrams in $\Lambda$. To each vertex $x_k\in\caV_{\partial\Lambda}$ associate a new vertex $v_k$ and edge $e_k=(v_k,x_k)$ and call $\Lambda'=(\caV_{\Lambda}\cup\bigcup_{i=1}^{d-1}\{v_k\},\caE_{\Lambda}\cup\bigcup_{i=1}^{d-1}\{e_k\})$ the \textbf{augmentation} of $\Lambda$.
Then we define \begin{equation}\cG=\{\Gamma\subset \Lambda'\:|\:\deg{x}=0\:\mathrm{mod}\:2,\:\forall x\in \caV_{\Lambda'}\}\end{equation} and \begin{equation}\cG_k=\{\Gamma\in\cG:\:|\caV_{\Gamma}\cap\{v_1,...,v_{d-1}\}|=k\}\end{equation}
Defining the weight \begin{equation}W(\Gamma)=\prod_{x\in\caV_{\Gamma}}\frac{-1}{\deg(x)+1}\end{equation}
we have the following.
\begin{theorem} Let $\Lambda$ be a finite bipartite graph and $\tilde{\F}_{\Lambda}$ and $B_{\Lambda}(t)$ as above. Then 
    \begin{equation}F_{\Lambda}(t)=-||\tilde{\F}_{\Lambda}(B_{\Lambda}(t))-\id||=\frac{p_{\Lambda}(t)}{q_{\Lambda}(t)}\end{equation} where \begin{equation}p_{\Lambda}(t)=\sum_{k\:\mathrm{odd}}t^k\sum_{\Gamma\in\cG_k}W(\Gamma)\end{equation} and \begin{equation}q_{\Lambda}(t)=\sum_{k\:\mathrm{even}}t^k\sum_{\Gamma\in\cG_k}W(\Gamma)\end{equation}
Moreover $F_{\Lambda}(t)\leq 0$ on $[0,1]$ and $F_{\Lambda}(0)=0$ and $F_{\Lambda}(1)\geq -1$.
\end{theorem}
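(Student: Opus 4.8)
The plan is to expand the transfer operator $\tilde{\F}_\Lambda$ acting on the product boundary condition $B_\Lambda(t)$ in the Hilbert--Schmidt (Pauli) basis, exactly as was done for the single site in Section~2, and to reorganize the resulting sum as a sum over loop diagrams in the augmentation $\Lambda'$. First I would recall that $\Lambda$ is a treelike concatenation of single sites, so $\tilde{\F}_\Lambda$ is the composition of the single-site operators $\tilde{\F}_d$ along the tree structure of $\Lambda$; since we evaluate on a product boundary condition and each single-site operator sends a product of $(\id + \mathbf{x}\cdot\boldsymbol{\sigma})$ factors to a linear combination of $\id$ and $\sigma_i$ (Theorem~\ref{comb}), the whole composition stays inside the span of $\{\id,\sigma_1,\sigma_2,\sigma_3\}$ at the root. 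The key observation is that the coefficient appearing in Theorem~\ref{comb}, namely $\frac{-1}{\deg+1}$-type weights with the multinomial ratio, is exactly the local vertex weight $W$, and the multinomial ratio is what forces the "even degree" (loop) condition: a term survives the trace only when, at every internal vertex, the number of each type of Pauli matrix entering is even. Tracking which boundary legs $v_k$ carry a $\sigma$ (there are $k$ of them, and by the bipartite sign structure they all carry the \emph{same} $\mathbf{x}\cdot\boldsymbol{\sigma}$ up to the sign $(-1)^{\chi_A}$) gives precisely the set $\cG_k$, and the product of local weights over occupied vertices gives $W(\Gamma)$. Assembling, $\Tr(\tilde{\F}_\Lambda(B_\Lambda(t)))$ becomes $\sum_{k \text{ even}} t^k \sum_{\Gamma\in\cG_k} W(\Gamma) = q_\Lambda(t)$ (the $k$ must be even for the diagram to close at the root, which is degree one), and $\Tr(\tilde{\F}_\Lambda(B_\Lambda(t))\,\sigma_i)$ picks up the odd-$k$ diagrams, giving $p_\Lambda(t)$; normalizing by the trace (dividing by $q_\Lambda$) and using that the output is $\id + F_\Lambda(t)\,\mathbf{x}/\|\mathbf{x}\|$ yields $\|\tilde{\F}_\Lambda(B_\Lambda(t)) - \id\| = |p_\Lambda(t)/q_\Lambda(t)|$.

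For the sign and magnitude claims: $F_\Lambda(0) = 0$ is immediate since at $t=0$ the boundary condition is $\id^{\otimes \partial\Lambda}$ and $\tilde{\F}_\Lambda$ is trace-preserving and unital on the identity, so $\tilde{\F}_\Lambda(\id) = \id$. The bound $F_\Lambda(1) \geq -1$ follows because $\tilde{\F}_\Lambda$ is (up to normalization) completely positive and the boundary condition $B_\Lambda(1)$ at $t=1$ is a tensor product of rank-one projectors $(\id \pm \mathbf{x}\cdot\boldsymbol{\sigma})$ (each $\geq 0$ with norm $\leq 2$), so $\tilde{\F}_\Lambda(B_\Lambda(1))$ is a positive $2\times 2$ matrix with trace normalized to $2$, whence its distance from $\id$ in operator norm is at most $1$. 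The sign $F_\Lambda(t) \leq 0$ on $[0,1]$ is the delicate point: one wants that the leading (odd) correction is a \emph{negative} multiple of $\mathbf{x}$, which at the single-site level is the content of the $-1/(k+2)$ sign in Theorem~2.x and of $F_d(t)\le 0$ in Theorem~\ref{function}; for a general cell one should argue that the bipartite sign assignment $(-1)^{\chi_A}$ is chosen precisely so that all surviving diagram contributions reinforce with a consistent overall sign, i.e. that $\sgn\big(\sum_{\Gamma\in\cG_k}W(\Gamma)\big)$ alternates with $k$, so that $p_\Lambda(t)/q_\Lambda(t) \le 0$. Concretely, replacing $\mathbf{x}$ by $-\mathbf{x}$ on the $\caV^A$ sublattice flips the relevant signs to make every loop contribute with weight $\prod \frac{1}{\deg+1} > 0$ in $q_\Lambda$ and $-\prod\frac{1}{\deg+1} < 0$ in $p_\Lambda$ — this is the bipartite "unfrustration" that makes the antiferromagnetic boundary condition the natural one.

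The main obstacle I expect is the bookkeeping in the loop expansion: showing rigorously that composing the single-site formulas of Theorem~\ref{comb} along the internal tree of $\Lambda$ produces exactly the global weight $W(\Gamma) = \prod_{x\in\caV_\Gamma}\frac{-1}{\deg(x)+1}$ with the constraint $\deg_\Gamma(x) \equiv 0 \pmod 2$ at every vertex, including correctly handling the case where multiple Pauli types meet at an internal vertex (the $k_1,k_2,k_3$ of Theorem~2.x) versus the case where the diagram locally looks like a single "color" propagating. The cleanest route is probably induction on the number of internal vertices of $\Lambda$, peeling off a leaf site, applying the single-site identity, and checking that the multinomial ratios telescope; the even-degree constraint then emerges because an odd local degree kills the term (sends it to a $\sigma_i$ that must eventually be contracted against another leg, forcing pairing). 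I would also need to confirm that the "any order" remark after Theorem~2.x guarantees the expansion is independent of how the tree of $\Lambda$ is rooted, so that $F_\Lambda(t)$ is genuinely a graph invariant of the cell $\Lambda$ with its marked root and boundary.
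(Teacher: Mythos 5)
Your overall picture --- expand $\tilde{\F}_{\Lambda}(B_{\Lambda}(t))$ in the Pauli basis, identify the surviving terms with even-degree subgraphs of the augmentation, factorize the weight over occupied vertices, and split by the parity of the number of occupied boundary legs into $p_{\Lambda}$ and $q_{\Lambda}$ --- is the same as the paper's, and your arguments for $F_{\Lambda}(0)=0$ and $F_{\Lambda}(1)\geq -1$ via unitality, positivity and trace normalization also match. But there is one genuine gap: you assume the cell $\Lambda$ is a tree. You assert that $\tilde{\F}_{\Lambda}$ is ``the composition of the single-site operators along the tree structure of $\Lambda$'' and propose an induction that peels off leaf sites. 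The theorem is stated for an arbitrary finite bipartite cell, which may contain cycles; the motivating example in Figure~\ref{fig:quasicayley} is a $4$-cycle attached to a root, and its transfer function $F_{\Lambda}(t)=-26t/(82+24t^2)$ has a nonconstant denominator precisely because of closed loops living entirely inside the cell. For a tree cell every even-degree subgraph avoiding the boundary is empty, so $q_{\Lambda}\equiv 1$; a scheme built from composing single-site channels along a tree can therefore never generate the nontrivial even-$k$ contributions to $q_{\Lambda}(t)$, which come exactly from the internal cycles your induction cannot see. The paper instead writes $\F_{\Lambda}=\sum_{e_i,e_j'}\bigotimes_{x\in\caV}\F_x\,\delta_{e_i,e_j'}$ as a genuine tensor-network contraction over all internal edges and reads off the nonzero terms as even-degree subgraphs $\Gamma$ directly, which is what makes the result applicable beyond tree cells --- indeed that is the stated purpose of this section.

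The bookkeeping worry you raise at the end --- that when several Pauli types meet at an internal vertex the single-site formula carries a multinomial ratio, not just the bare factor $-1/(\deg(x)+1)$ --- is a legitimate concern, but it is one the paper's own proof also passes over rather quickly (it asserts the sitewise weight is $-1/(\deg(x)+1)$ and that the $x_i$-dependence collapses to powers of $\|\mathbf{x}\|$). So on that point you are not behind the paper; on the cycle issue you are, and fixing it requires replacing the composition-of-channels/leaf-peeling picture with the full edge-index contraction.
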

\begin{proof}
    The operator $\F_{\Lambda}$ is a concatenation of single site operators $\F_x$ for each $x\in C$ where the concatenation of indices is consistent with the edges connecting vertices. Specifically we have \begin{equation}\F_x=T_{e_1,...,e_j}^{e_1',...,e_k'}\end{equation} where $j+k=\deg(x)$ and $T$ is the invariant intertwining tensor that finds $M_{d+1}\subset M_2^{\otimes d}$. Thus the entire transfer operator is such that \begin{equation}\F_{\Lambda}=\sum_{e_i,e_j'\in\caE_{\Lambda}}\bigotimes_{\caV}\F_x\delta_{e_i,e_j'}\end{equation}
    Thus from the formulae above any nonzero term in the transfer operator corresponds to a weighted sum over a subset $\Gamma\subset C$ such that $\deg(x)$ is even for all indices except any in $\partial C$, since every nonzero concatenation must have an entering and leaving index. Thus we can write it as $\sum_{\Gamma\in\cG}W(\Gamma)$ for some weight function $W(\cdot)$. Moreover this weight function factorizes sitewise; we notice that at each site $x\in \Gamma$ if $\deg(x)=2k$ then the weight of the $\F_x$ is given from our above formula as $-\frac{1}{2k+1}=-\frac{1}{\deg(x)+1}$ and is zero if $\deg(x)$ is odd. Then
    every graph $\Gamma$ can be decomposed into connected components which intersect the boundary and those that do not. The key thing to note is that if $\Gamma=\{\gamma_1,...,\gamma_n\}$ are the connected components of $\Gamma$, then there is at most one component $\gamma_j$ intersecting $\overline{0}$ and the boundary. Thus we have for $\Gamma\in\cG_k$ and $k$ even \begin{equation}W(\Gamma)=||\mathbf{x}||^k\prod_{i} W(\gamma_i)\end{equation} where \begin{equation}W(\gamma)=\prod_{x\in\caV_{\gamma}}\frac{-1}{\deg(x)+1}\end{equation}  Then \begin{equation}\Tr(\F_{\Lambda}(B_{\Lambda}(\mathbf{x})))=\sum_{k\:\mathrm{even}}||\mathbf{x}||^k\sum_{\Gamma\in\cG_k}W(\Gamma)=q_{\Lambda}(||\mathbf{x}||)\end{equation} is a function of $||\mathbf{x}||$ only, since every configuration in $\cG_k$ for $k$ even does not touch $\overline{0}$. For the $\sigma_i$ parts of the output, we have that there must be a path from the boundary to the root; thus $\Gamma\in\cG_k$ for $k$ odd, and we have that  \begin{equation}\Tr(\F_{\Lambda}(B_{\Lambda}(\mathbf{x}))\cdot\sigma_i)=\frac{x_i}{||\mathbf{x}||}\sum_{k\:\mathrm{odd}}||\mathbf{x}||^k\sum_{\Gamma\in\cG_k}W(\Gamma)=\frac{x_i}{||\mathbf{x}||}\cdot p_{\Lambda}(||\mathbf{x}||)\end{equation} and so for our trace-normalized operator $\tilde{\F}_{\Lambda}$ we have \begin{equation}||\tilde{\F}_{\Lambda}(B(\mathbf{x}))-\id||=-F_{\Lambda}(||\mathbf{x}||)\end{equation} where \begin{equation}F_{\Lambda}(t)=\frac{p_{\Lambda}(t)}{q_{\Lambda}(t)}\end{equation} with the forms given in the theorem.
    $F_{\Lambda}(0)=0$ then follows since $p_{\Lambda}(t)$ is odd and $q_{\Lambda}(0)\neq 0$ since $\Lambda$ is bipartite; $-1\leq F_{\Lambda}(t)\leq 0$ for all $t\in[0,1]$ follows from the complete positivity of $\F_{\Lambda}$ and trace normalization of $\tilde{\F}_{\Lambda}$. 
\end{proof}
Note our transfer function for the cell in Figure \ref{fig:quasicayley} is $F_{\Lambda}(t)=-\frac{26t}{82+24t^2}$ and $F'_{\Lambda}(0)=-\frac{13}{41}$ which fails the hypothesis of Theorem \ref{quasicayley}.
\begin{corollary}\label{quasicayley}
    For a rooted quasi-tree $T_{\Lambda}$ with generating finite bipartite cell graph $\Lambda$, the AKLT Hamiltonian does not have a unique ground state in the infinite volume limit if \begin{equation}\frac{p_{\Lambda}'(0)}{q_{\Lambda}(0)}=\frac{\sum_{\Gamma\in\cG_1}W(\Gamma)}{\sum_{\Gamma\in\cG_0}W(\Gamma)}<-1\end{equation}
\end{corollary}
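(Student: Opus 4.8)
The plan is to run the first proof of the Cayley-tree non-uniqueness theorem above, with the single-site transfer function $F_d$ replaced by the cell transfer function $F_\Lambda(t)=p_\Lambda(t)/q_\Lambda(t)$ supplied by the preceding theorem. The mechanism is that a fixed point $t_0\in(0,1]$ of $t\mapsto -F_\Lambda(t)$ produces a one-parameter family of product boundary conditions on $T_\Lambda$ whose images under the concatenated tree transfer operator retain a fixed nonzero magnitude at the root for every truncation level, so the infinite-volume limit cannot be independent of the boundary sequence.

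First I would collect the analytic input. By the preceding theorem $F_\Lambda$ is continuous on $[0,1]$ (equivalently $q_\Lambda$ does not vanish there, which is forced by complete positivity and trace-normalization of $\tilde{\F}_\Lambda$), and satisfies $F_\Lambda(0)=0$, $F_\Lambda\le 0$ on $[0,1]$, and $F_\Lambda(1)\ge -1$. Since $p_\Lambda$ is odd we have $p_\Lambda(0)=0$ and $q_\Lambda(0)\neq 0$, so $F_\Lambda$ is differentiable at the origin with $F_\Lambda'(0)=\frac{p_\Lambda'(0)}{q_\Lambda(0)}=\frac{\sum_{\Gamma\in\cG_1}W(\Gamma)}{\sum_{\Gamma\in\cG_0}W(\Gamma)}$, which is exactly the quantity in the hypothesis; thus the hypothesis reads $F_\Lambda'(0)<-1$.

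Next I would produce the fixed point and propagate it. Set $g(t)=F_\Lambda(t)+t$, so $g(0)=0$ and $g'(0)=F_\Lambda'(0)+1<0$, hence $g(t)<0$ for small $t>0$, while $g(1)=F_\Lambda(1)+1\ge 0$; the intermediate value theorem gives $t_0\in(0,1]$ with $F_\Lambda(t_0)=-t_0$, equivalently (using $F_\Lambda\le 0$) $|F_\Lambda(t_0)|=t_0$, so $t_0$ is a nonzero fixed point of $t\mapsto|F_\Lambda(t)|$. A treelike join of bipartite cells is again bipartite (any simple cycle of $T_\Lambda$ is confined to a single cell, since the inter-cell edges carry the tree structure of $T_d$), so $T_\Lambda$ carries a global $2$-colouring $\chi$. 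Fixing $\mathbf{x}$ with $\|\mathbf{x}\|=t_0$, I would take on the $n$-th truncation $T_{\Lambda,n}$ the uniform product boundary condition $B_n(\mathbf{x})$ placing $\id+(-1)^{\chi(y)}\mathbf{x}\cdot\boldsymbol{\sigma}$ at each $y\in\partial T_{\Lambda,n}$. This lies in the invariant subspace of the tree transfer operator identified above, and since all subtrees below a given cell carry identical data, each application of $\tilde{\F}_\Lambda$ acts as $B_\Lambda(\mathbf{y})\mapsto B_\Lambda(\mathbf{y}')$ with $\|\mathbf{y}'\|=|F_\Lambda(\|\mathbf{y}\|)|$, the accompanying sign being absorbed into the colour of the next layer; since $|F_\Lambda(t_0)|=t_0$, iterating through all cell-layers yields $\tilde{\F}_{T_{\Lambda,n}}(B_n(\mathbf{x}))=\id\pm\mathbf{x}\cdot\boldsymbol{\sigma}$ at the global root, of norm $t_0$ independently of $n$. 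Then, exactly as in the Cayley-tree argument, for $A\in\cA$ with $\supp(A)$ contained in the root cell one gets $\omega^n_{B_n(\mathbf{x})}(A)=\langle\psi_\Lambda(B_\Lambda(\pm\mathbf{x})),A\,\psi_\Lambda(B_\Lambda(\pm\mathbf{x}))\rangle$, whereas $\omega^n_{\id}(A)=\langle\psi_\Lambda(\id),A\,\psi_\Lambda(\id)\rangle$; since $t_0\neq 0$ these disagree for a suitable local $A$, so along a subsequence of boundary conditions the infinite-volume limit differs from the one given by the all-identity sequence, and the ground state is not unique.

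The genuinely delicate point is the bookkeeping in the last step: one must check that the colour pattern $(-1)^{\chi(y)}$ inherited from the bipartition of $\Lambda$ is globally consistent across the concatenation of cells, so that the norm-$t_0$ product states really do form an invariant family for $\tilde{\F}_{T_{\Lambda,n}}$ and the entire dynamics collapses to the scalar recursion $t\mapsto|F_\Lambda(t)|$, and that a norm-$t_0$ root boundary is separated from the identity on some root-cell observable — the latter point being inherited unchanged from the first Cayley-tree proof. Everything else is a transcription of that proof with $F_d$ replaced by $F_\Lambda$, using only properties of $F_\Lambda$ already established.
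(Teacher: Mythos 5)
Your proposal is correct and follows essentially the same route as the paper: an intermediate-value argument using $F_\Lambda(0)=0$, $F_\Lambda'(0)<-1$, $F_\Lambda\le 0$ and $F_\Lambda(1)\ge -1$ to produce a nonzero solution of $F_\Lambda(t)=-t$, followed by propagating the corresponding staggered product boundary condition through the concatenated cell transfer operators to show a root observable distinguishes it from the identity boundary condition. Your added attention to the global consistency of the bipartition across cells is a point the paper leaves implicit, but it does not change the argument.
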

\begin{proof}
    As $F_{\Lambda}(t)\leq 0$ on $[0,1]$, $F_{\Lambda}(0)=0$, and $F_{\Lambda}(1)\geq -1$ we have that there is a solution to $F_{\Lambda}(t)=-t$ if $F_{\Lambda}'(0)=\frac{p_{\Lambda}'(0)}{q_{\Lambda}(0)}=\frac{\sum_{\Gamma\in\cG_1}W(\Gamma)}{\sum_{\Gamma\in\cG_0}W(\Gamma)}<-1$. Call this solution $t_{\Lambda}$. Now let $T_{\Lambda}^n$ be the $n$th layer of the tree generated from the cell $\Lambda$; then for an observble $A$ with $\supp(A)=\{\overline{0}\}$ we have that for the finite volume state $\psi_n$ on $T_{\Lambda}^n$ we have \begin{align*}\lim_{n\to\infty}\langle\psi_n(B_{T_{\Lambda}^n}(t_{\Lambda}\cdot\sigma_1)),A\psi_n(B_{T_{\Lambda}^n}(t_{\Lambda}\cdot\sigma_1))\rangle=\lim_{n\to\infty}\langle\psi_0(\tilde{\F}_{T_{\Lambda}^n}(B_{T_{\Lambda}^n}(t_{\Lambda}\cdot\sigma_1))),A\psi_0(\tilde{\F}_{T_{\Lambda}^n}(B_{T_{\Lambda}^n}(t_{\Lambda}\cdot\sigma_1)))\rangle\\=\lim_{n\to\infty}\langle\psi_0(\tilde{\F}_{\Lambda}^{\circ n}(B_{T_{\Lambda}^n}(t_{\Lambda}\cdot\sigma_1))),A\psi_0(\tilde{\F}_{C}^{\circ n}(B_{T_{\Lambda}^n}(t_{\Lambda}\cdot\sigma_1)))\rangle=\langle\psi_0(B_{T_{\Lambda}^0}(\pm t_{\Lambda}\cdot\sigma_1))),A\psi_0(B_{T_{\Lambda}^0}(\pm t_{\Lambda}\cdot\sigma_1))\rangle\\\neq \lim_{n\to\infty}\langle\psi_n(\tilde{\F}_{T_{\Lambda}^n}(B_{T_{\Lambda}^n}(\mathbf{0}))),A\psi_n(\tilde{\F}_{T_{\Lambda}^n}(B_{T_{\Lambda}^n}(\mathbf{0})))\rangle=\langle\psi_0(B_{T_{\Lambda}^0}( \mathbf{0}))),A\psi_0(B_{T_{\Lambda}^0}(\mathbf{0}))\rangle\end{align*}
    Thus the infinite-volume limits given by the sequence $B_{T_{\Lambda}^n}(t_{\Lambda}\cdot\sigma_1)$ and $B_{T_{\Lambda}^n}(\id)$ lead to different expectations for local observables, and so the infinite-volume ground state, as a weak limit of finite-volume states, is not unique.
\end{proof} 
\begin{subsection}{Examples}
\begin{subsubsection}{Decorated Cayley trees}
    We can define a class of trees by adding $g\geq 0$ sites to an existing Cayley tree.
    The function \begin{equation}F_{d,g}(t)=\frac{p_{\Lambda_g}(t)}{q_{\Lambda_g}(t)}=\frac{1}{3^g}F_{d}(t)\end{equation} and we have that \begin{equation}F_{d,g}'(0)=\frac{1}{3^g}\frac{d-1}{3}\end{equation}
    Thus we have the following corollary:
    \begin{corollary}
    Let $\Gamma$ be a Cayley tree of degree $d$ with decoration number $g$. If $d>3^{g+1}+1$ then the AKLT ground state on $\Gamma$ is ordered, and if $d<3^{g+1}+1$ it is disordered and unique.
    \end{corollary}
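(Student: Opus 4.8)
The plan is to push everything through the scalar transfer function $F_{d,g}(t)=3^{-g}F_{d}(t)$ computed above. A decorated Cayley tree of degree $d$ with decoration number $g$ is a rooted quasi-tree whose generating cell $\Lambda_g$ is bipartite, so Corollary~\ref{quasicayley} and the estimates of Theorem~\ref{function} apply with $F_{\Lambda_g}=F_{d,g}$, and $F_{d,g}'(0)=3^{-g}F_{d}'(0)=(1-d)/3^{g+1}$ by Theorem~\ref{function}(6), which is negative for $d\ge 2$.

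\emph{Ordered regime.} Suppose $d>3^{g+1}+1$; this is exactly the condition $F_{d,g}'(0)=(1-d)/3^{g+1}<-1$. Since $F_{d,g}(t)\le 0$ on $[0,1]$, $F_{d,g}(0)=0$, and $F_{d,g}(1)\ge -1$, Corollary~\ref{quasicayley} applies verbatim: there is $t_{\Lambda_g}\in(0,1]$ with $F_{d,g}(t_{\Lambda_g})=-t_{\Lambda_g}$, so iterating the cell transfer operator preserves the product boundary condition of radius $t_{\Lambda_g}$ up to sign, and the sequences $B_n(t_{\Lambda_g}\sigma_1)$ and $B_n(\mathbf{0})$ yield distinct infinite-volume expectations for an observable supported at the root. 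Hence the ground state is non-unique, the surviving boundary field being the antiferromagnetic order parameter, i.e.\ the state is ordered.

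\emph{Disordered regime.} Suppose $d<3^{g+1}+1$ and put $\lambda:=(d-1)/3^{g+1}<1$. The left inequality of Theorem~\ref{function}(7) gives $-\tfrac{d-1}{3}t\le F_{d}(t)\le 0$ on $[0,1]$, hence
\begin{equation}
|F_{d,g}(t)|=3^{-g}|F_{d}(t)|\le\lambda\,t,\qquad t\in[0,1],
\end{equation}
so $(-F_{d,g})^{\circ n}(t)\le\lambda^{n}t\to 0$ for every $t\in[0,1]$. As in the two proofs for the Cayley tree, iterating the cell transfer operator $n$ times sends a product boundary condition $B(\mathbf{x})$ to $\id+r_n\,\mathbf{u}_n$ with $\mathbf{u}_n$ a unit vector and $r_n=(-F_{d,g})^{\circ n}(\|\mathbf{x}\|)\le\lambda^{n}\|\mathbf{x}\|\to 0$; thus every sequence of product boundary conditions produces the same infinite-volume state. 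To extend this to an arbitrary sequence $B_n\in M_2^{\otimes|\partial T_n|}$ I would expand $B_n$ in the Hilbert--Schmidt (Pauli) basis and use the explicit sitewise formulas of Theorem~\ref{comb}, whose coefficients are degree-independent, to bound layer by layer the norm of the cell transfer operator on the traceless sector by $\lambda$; since the $(d-1)^{n}$-fold branching of the tree is already built into $\lambda$ (it is essentially $|F_{d,g}'(0)|$), the iterated transfer operator then sends every $B_n$ to $\id$, so the infinite-volume ground state is unique and in particular disordered.

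The step I expect to be the main obstacle is this last one: showing that the one-parameter family of product boundary conditions is extremal, i.e.\ that the per-layer contraction bound $\lambda$ holds uniformly over \emph{all} Pauli monomials on the leaves and not only the aligned products captured by $F_{d,g}$. A general monomial can be routed through the tree in many ways, and one must rule out that some routing grows faster than $\lambda^{n}$; this amounts to feeding the sitewise identities of Section~2 into the tree recursion with care. Everything else --- the ordered direction, and the contraction of the iterates of $F_{d,g}$ on product boundary conditions --- follows directly from the results already established.
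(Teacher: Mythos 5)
Your argument is essentially the same in substance as what the paper intends, but the paper's own ``proof'' is a one-sentence deferral: it asserts that the permutation invariance and simplicity of the decorated tree let one ``carry out the same analysis as in [FNW],'' whereas you actually execute the argument. Your ordered-regime step is complete and correct within the paper's framework: $F_{d,g}'(0)=(1-d)/3^{g+1}<-1$ is exactly $d>3^{g+1}+1$, and Corollary~\ref{quasicayley} then produces the fixed point $t_{\Lambda_g}$ and the two distinguishable boundary sequences. (You also silently fix a sign slip in the paper, which writes $F_{d,g}'(0)=\tfrac{1}{3^g}\tfrac{d-1}{3}$ despite Theorem~\ref{function}(6) giving $F_d'(0)=\tfrac{1-d}{3}$.)

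The gap you flag in the disordered regime is genuine, and you are right to flag it: the contraction $|F_{d,g}(t)|\le\lambda t$ with $\lambda=(d-1)/3^{g+1}<1$ only controls the one-parameter family of aligned product boundary conditions, and uniqueness of the infinite-volume state requires a contraction estimate uniform over all of $M_2^{\otimes|\partial T_n|}$, i.e.\ over all Pauli monomials on the leaves and all routings through the tree. The paper does not close this gap either --- it outsources precisely this step to the FNW analysis, invoking the permutation invariance of the inputs at each site to reduce the general boundary condition to the symmetric sector --- and elsewhere the paper concedes the same limitation for its other non-uniqueness results (``The proof above is not enough, however, to show that this occurs for every possible choice of boundary condition''). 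So your proposal matches the paper's actual state of rigor: the ordered direction is proved, and the disordered/uniqueness direction rests on the FNW machinery that neither you nor the paper reproduces. If you want to complete it, the missing ingredient is FNW's argument that the single-site channel restricted to the traceless sector has norm bounded by $|F_d'(0)|/(d-1)=1/3$ per tensor factor, so that the $(d-1)$-fold branching times the per-edge decay $1/3$ (and the extra $3^{-g}$ from the decorations) yields the advertised $\lambda$ on the whole traceless sector, not just on aligned products.
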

    \begin{proof}
    From of the simplicity of the tree (including the permutation invariance of the inputs) one can carry out the same analysis as in \cite{fnw} which agrees with the results of \cite{pomata21}.
    \end{proof}
    The uniqueness result will not be recoverable in general.
\end{subsubsection}
\begin{subsubsection}{Rooted Quasi-Cayley trees}
    We can generate a tree out of an underlying graph $\Lambda$ which is also a tree. This gives a condition that is much easier to verify.
    \begin{theorem}
        Let $\Gamma$ be a quasi-Cayley tree generated by a cell graph $\Lambda$ that is also a tree. If \begin{equation}\sum_{\gamma\in \cG_1}W(\gamma)=\sum_{\gamma\in \cG_1}\left(\frac{1}{3}\right)^{|\gamma|}>1\end{equation} then the AKLT model on this tree does not have a unique ground state.
    \end{theorem}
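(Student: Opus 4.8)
The plan is to read the criterion straight off Corollary~\ref{quasicayley}: its hypothesis is exactly $p_\Lambda'(0)/q_\Lambda(0) < -1$, so it suffices to compute $q_\Lambda(0)=\sum_{\Gamma\in\cG_0}W(\Gamma)$ and $p_\Lambda'(0)=\sum_{\Gamma\in\cG_1}W(\Gamma)$ under the hypothesis that $\Lambda$ is a tree, and to check that the resulting inequality is the stated one.

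First I would note that the tree hypothesis trivializes the denominator. Every subgraph of a tree is a forest, and a nonempty forest has a vertex of degree one; hence the only subgraph of $\Lambda'$ avoiding the markers $v_1,\dots,v_{d-1}$ in which every vertex has even degree is the empty one. Thus $\cG_0=\{\varnothing\}$ and $q_\Lambda(0)=W(\varnothing)=1$. This is the single place where it genuinely matters that $\Lambda$ is a tree, and it is precisely what turns the general rational condition into a clean sum over loop diagrams: for an arbitrary bipartite cell $q_\Lambda(0)$ is a nontrivial alternating sum over all even subgraphs.

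Next I would classify $\cG_1$. A configuration $\Gamma\in\cG_1$ meets exactly one marker $v_k$, and since $k=1$ is odd the parity/connectivity argument from the proof that $F_\Lambda=p_\Lambda/q_\Lambda$ forces $\Gamma$ to reach the root $\overline 0$, so $\Gamma$ contains a path from $x_k$ to $\overline 0$. As $\Lambda$ is a tree this path is unique, and nothing else can be present: an extra edge would either push an internal vertex to odd degree, or terminate at a vertex of $\Lambda$ other than $\overline 0$ or a boundary vertex (forbidden), or close a cycle (impossible in a tree). Hence $\cG_1$ is in bijection with $\partial\Lambda$, the element $\gamma_k$ being the unique $x_k$-to-$\overline 0$ path together with the marker edge $e_k$. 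Along $\gamma_k$ each cell vertex strictly between $x_k$ and $\overline 0$ has exactly two occupied edges and so contributes the local weight $-\tfrac13$ (by Theorem~\ref{comb} and the remark after it this is independent of its degree in $\Lambda$), while the endpoints contribute trivially; writing $|\gamma_k|$ for the number of these interior vertices, the product of local weights is $(-\tfrac13)^{|\gamma_k|}$. The staggered boundary condition $B_\Lambda$ contributes in addition the sign $(-1)^{\chi_A(x_k)}$, and since $\Lambda$ is bipartite $\chi_A(x_k)$ and $\chi_A(\overline 0)$ differ by the parity of $\operatorname{dist}(x_k,\overline 0)=|\gamma_k|+1$; fixing, without loss of generality, $\overline 0$ in the class with $\chi_A=0$, this sign is $(-1)^{|\gamma_k|+1}$, which cancels the sign of $(-\tfrac13)^{|\gamma_k|}$. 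The net contribution of $\gamma_k$ is therefore $-(\tfrac13)^{|\gamma_k|}$, so $p_\Lambda'(0)=-\sum_{\gamma\in\cG_1}(\tfrac13)^{|\gamma|}$.

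Putting the two computations together, $F_\Lambda'(0)=p_\Lambda'(0)/q_\Lambda(0)=-\sum_{\gamma\in\cG_1}(\tfrac13)^{|\gamma|}$, so Corollary~\ref{quasicayley} applies exactly when this is $<-1$, i.e.\ when $\sum_{\gamma\in\cG_1}(\tfrac13)^{|\gamma|}>1$, which yields non-uniqueness of the infinite-volume ground state. The main thing to get right is the sign bookkeeping in the previous paragraph: one must verify that the antiferromagnetic sign $(-1)^{\chi_A}$ built into $B_\Lambda$ conspires with the bipartition so that every path in $\cG_1$ contributes with the same negative sign, with no cancellation in the sum — this is exactly what makes the criterion a monotone geometric-type condition and also what keeps $F_\Lambda'(0)<0$. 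The remaining steps are immediate from the graphical formalism of this section.
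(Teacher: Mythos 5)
Your proof is correct and follows essentially the same route as the paper's: compute $F_\Lambda'(0)=p_\Lambda'(0)/q_\Lambda(0)$, use the tree hypothesis to get $q_\Lambda(0)=1$ and to identify $\cG_1$ with the unique boundary-to-root paths, then invoke Corollary~\ref{quasicayley}. Your sign bookkeeping (the cancellation between $(-\tfrac13)^{|\gamma|}$ and the staggered factor $(-1)^{\chi_A}$) is in fact more careful than the paper's, which states the hypothesis as $\sum_{\gamma\in\cG_1}(\tfrac13)^{|\gamma|}>1$ yet asserts $F_\Lambda'(0)<-1$ without explicitly reconciling the signs.
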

    \begin{proof}
    From the above expression for $F_{\Lambda}(t)$ we have \begin{equation}F_{\Lambda}'(0)=\frac{p_{\Lambda}'(0)}{q_{\Lambda}(0)}=\sum_{\gamma\in \cG_1}W(\gamma)\end{equation} since $q_{\Lambda}(0)$ is the sum \begin{equation}\sum_{\Gamma\in\cG_0}W(\Gamma)=1\end{equation} since $\cG_0=\emptyset$ so $q_{\Lambda}(0)=1$ by the convention that $W(\emptyset)=1$. 
    Thus the condition above
    ensures that $F_{\Lambda}'(0)<-1$ and thus has a solution $t_{\Lambda}$ where $F_{\Lambda}(t_{\Lambda})=-t_{\Lambda}$. From a similar argument as Theorem \ref{quasicayley}, this implies observables at the origin have different ground state expectations.
    \end{proof}
    In words, this condition is saying that the sum of all the paths from the root of the cell $\Lambda$ to any leaves in $\Lambda$, weighted by $(-3)^{1-|\gamma|}$ is greater than $1$.
    The proof above is not enough, however, to show that this occurs for every possible choice of boundary condition; also note that this condition depends not only on the tree $\Lambda$ but also on the root $0$.   
\begin{figure}[t]
\begin{tikzpicture}[
  every node/.style = {circle, draw, fill=white, minimum size=1mm, inner sep=0pt},
  level distance = 1.5cm,
  sibling distance = 2.5cm
]

\node {0} 
  child {node {} 
    child {node {}} 
  }
  child {node {} 
    child {node {}} 
  };

\end{tikzpicture}\hspace{2cm}
\begin{tikzpicture}[
  every node/.style = {circle, draw, fill=white, minimum size=1mm,inner sep=0pt},
  level distance = 1cm,
  level 1/.style={sibling distance=2.5cm},
  level 2/.style={sibling distance=1.5cm}
]

\node (root) {0} 
  child {node {} 
    child {node {}
      child {node {} 
        child {node {}}
      }
      child {node {}}{
        child {node {}}}
    }
  }
  child {node {} 
    child {node {}
      child {node {} 
        child {node {}}
      }
      child {node {}}{
        child {node {}}}
    }
  };

\end{tikzpicture}
\caption{First two layers of the quasi-Cayley tree generated from the tree on the left; note this is also the degree $d=2$ Cayley tree with decoration number $g=1$}
\end{figure}
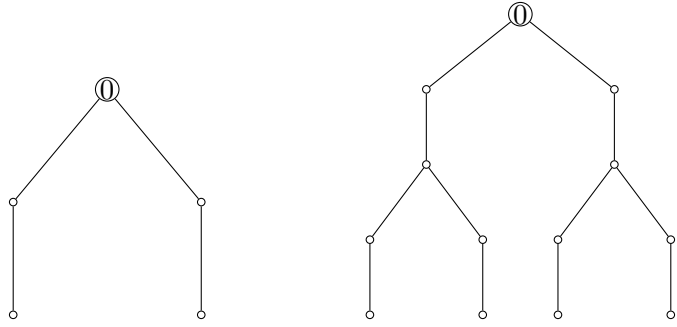
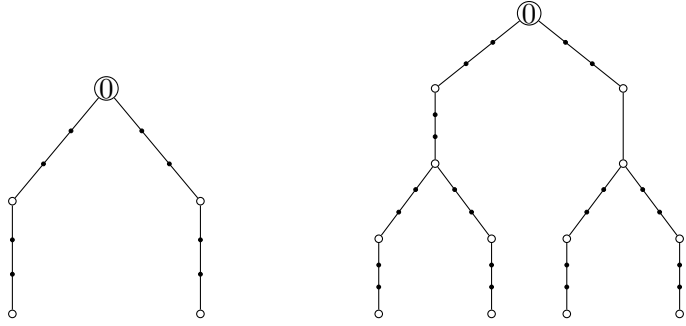
\begin{figure}[t]
    \begin{tikzpicture}[
  every node/.style = {circle, draw, fill=white, minimum size=1mm, inner sep=0pt},
  level distance = 1.5cm,
  sibling distance = 2.5cm,
  intermediate/.style = {fill=black, minimum size=0.5mm, inner sep=0pt} 
]

\node {0} 
  child {node {} 
    child {node {} 
      edge from parent
        node [intermediate, pos=0.33] {} 
        node [intermediate, pos=0.66] {} 
    }
    edge from parent
      node [intermediate, pos=0.33] {} 
      node [intermediate, pos=0.66] {} 
  }
  child {node {} 
    child {node {} 
      edge from parent
        node [intermediate, pos=0.33] {} 
        node [intermediate, pos=0.66] {} 
    }
    edge from parent
      node [intermediate, pos=0.33] {} 
      node [intermediate, pos=0.66] {} 
  };

\end{tikzpicture}
\hspace{2cm}
\begin{tikzpicture}[
  every node/.style = {circle, draw, fill=white, minimum size=1mm, inner sep=0pt},
  intermediate/.style = {fill=black, minimum size=0.5mm, inner sep=0pt}, 
  level distance = 1cm,
  level 1/.style = {sibling distance=2.5cm},
  level 2/.style = {sibling distance=1.5cm}
]

\node {0} 
  child {node {} 
    child {node {}
      child {node {}
        child {node {}
          edge from parent
            node [intermediate, pos=0.33] {} 
            node [intermediate, pos=0.66] {} 
        }
        edge from parent
          node [intermediate, pos=0.33] {} 
          node [intermediate, pos=0.66] {} 
      }
      child {node {}
      child{node {}
        edge from parent
          node [intermediate, pos=0.33] {} 
          node [intermediate, pos=0.66] {} 
          }
        edge from parent
          node [intermediate, pos=0.33] {} 
          node [intermediate, pos=0.66] {} 
      }
      edge from parent
        node [intermediate, pos=0.33] {} 
        node [intermediate, pos=0.66] {} 
    }
    edge from parent
      node [intermediate, pos=0.33] {} 
      node [intermediate, pos=0.66] {} 
  }
  child {node {} 
    child {node {}
      child {node {}
        child {node {}
          edge from parent
            node [intermediate, pos=0.33] {} 
            node [intermediate, pos=0.66] {} 
        }
        edge from parent
          node [intermediate, pos=0.33] {} 
          node [intermediate, pos=0.66] {} 
      }
      child {node {}
      child{node {}
        edge from parent
          node [intermediate, pos=0.33] {} 
          node [intermediate, pos=0.66] {} 
          }
        edge from parent
          node [intermediate, pos=0.33] {} 
          node [intermediate, pos=0.66] {} 
      }
    }
    edge from parent
      node [intermediate, pos=0.33] {} 
      node [intermediate, pos=0.66] {} 
  };

\end{tikzpicture}
    \caption{First two layers of the tree from Figure $2$ with decoration number $g=2$; note this is also the Cayley tree with degree $d=2$ and decoration number $g=5$.}
\end{figure}
\end{subsubsection}
\end{subsection}
\end{section}
\begin{section}{Irregular Trees}
Following our second proof of the Cayley-tree result using part (g) of Theorem \ref{function}, we can extend the result to irregular trees by starting with a lemma:
    \begin{lemma}
        Let $d_i$ be a sequence of real numbers with $d_i\geq 2$. If the partial products \begin{equation}a_n=\prod_{i=1}^{n}\frac{3}{(d_i-1)}\end{equation} is summable, such that $\sum_{n}a_n<\infty$ then the infinite functional composition $\bigcirc_{i=1}^{\infty}F_{d_i}(x)>0$ for all $x\in (0,1]$. Moreover if the sequence of partial products is such that $a_n\to\infty$ then the infinite functional composition $\bigcirc_{i=1}^{\infty}F_{d_i}(x)=0$ for all $x\in[0,1]$.
    \end{lemma}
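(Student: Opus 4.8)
The plan is to reduce the behaviour of the infinite composition to a single scalar affine recurrence by passing to reciprocals, in the spirit of the second proof of the Cayley-tree result but now with a multiplier that varies from layer to layer. The first thing I would pin down is the sign bookkeeping: since $F_d\le 0$ on $[0,1]$ and $F_d$ is odd (Theorem~\ref{function}), the finite composition $(F_{d_1}\circ\cdots\circ F_{d_n})(x)$ alternates in sign with $n$, so the quantity that is actually controlled — and the way the statement should be read — is the magnitude $\big|\bigcirc_{i=1}^n F_{d_i}(x)\big|=\|\tilde{\F}_{T_n}(B(x))-\id\|$. Writing $\phi_d:=-F_d\ge 0$ on $[0,1]$, oddness gives $|x_n|=\phi_{d_{(n)}}(|x_{n-1}|)$, where $x_0=x$ and $d_{(n)}$ runs through the $d_i$ in the order in which the layers are contracted. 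I would then invoke the bound $\|F_d\|<1$ from Theorem~\ref{function} to ensure $|x_n|\in[0,1)$ at every step, so that the final two-sided inequality of that theorem,
\[
\frac{(d-1)t}{3+(d-1)t}\;\le\;\phi_d(t)\;\le\;\frac{(d-1)t}{3},\qquad t\in[0,1],
\]
may be applied at each stage of the iteration.

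For the summable case, fix $x\in(0,1]$ and set $\ell_d(t):=\tfrac{(d-1)t}{3+(d-1)t}$, which is increasing and maps $[0,1]$ into $[0,1)$. The left inequality above and a one-line induction give $|x_n|\ge L_n$, where $L_0=x$ and $L_j=\ell_{d_{(j)}}(L_{j-1})$. Passing to $M_n:=1/L_n$, the defining relation for $\ell_d$ becomes the affine recurrence $M_n=\tfrac{3}{d_{(n)}-1}M_{n-1}+1$ with $M_0=1/x$, whose closed form, after reindexing the layers, is $M_n=\tfrac{a_n}{x}+\sum_{k=1}^{n-1}a_k+1$ with $a_k=\prod_{i=1}^k\tfrac{3}{d_i-1}$ — precisely the estimate from the second Cayley-tree proof. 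If $\sum_n a_n<\infty$ then $a_n\to 0$, so $M_n\to M_\infty:=1+\sum_{k\ge 1}a_k<\infty$ and $\liminf_n|x_n|\ge 1/M_\infty>0$, which is the asserted non-vanishing. For the divergent case I would use the right inequality instead, which yields $|x_n|\le\tfrac{d_{(n)}-1}{3}\,|x_{n-1}|$ and hence, iterating from $|x_0|=x\le 1$, $|x_n|\le x\prod_{i=1}^n\tfrac{d_i-1}{3}=x/a_n$; if $a_n\to\infty$ this forces $|x_n|\to 0$ for every $x\in[0,1]$ (the case $x=0$ being trivial since $F_d(0)=0$).

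I expect the only genuine subtlety to be the domain restriction flagged above: the two-sided bound of Theorem~\ref{function} is valid only on $[0,1]$, while the naive linear majorant $\tfrac{(d-1)t}{3}$ can exceed $1$, so one must first use $\|F_d\|<1$ to certify that all iterates remain in $[0,1)$ before either recurrence argument can be run; a minor secondary point is recording explicitly that ``$\bigcirc_{i=1}^\infty F_{d_i}(x)>0$'' is to be read as a statement about $\liminf_n$ of the magnitudes, since the composition itself keeps changing sign, and similarly that ``$=0$'' means genuine convergence of the magnitudes to $0$. Apart from that, the proof is the same reciprocal/linear-recurrence computation that already underlies the second proof of the Cayley-tree theorem, with the single modification that the multiplier $\tfrac{3}{d_i-1}$ is now layer-dependent.
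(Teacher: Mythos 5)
Your proof is correct and follows essentially the same route as the paper's: both rest on the two-sided bound of Theorem~\ref{function}(g), turning the lower bound into the affine recurrence on reciprocals with closed form $\tfrac{a_n}{x}+1+\sum_{k=1}^{n-1}a_k$ for the summable case and using the linear majorant $\tfrac{(d-1)t}{3}$ to force the iterates to zero when $a_n\to\infty$. Your explicit bookkeeping of the sign alternation and of the domain on which the bounds apply is a refinement the paper leaves implicit, but it does not change the argument.
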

    \begin{proof}
        Note that from the bound $F_d(x)\geq \left(\frac{3}{(d-1)x}+1\right)^{-1}$ we have that the iterates \begin{align*}\lim_{N\to\infty}\bigcirc_{i=1}^{N}F_{d_i}(x)\geq \left(\frac{1}{x}\cdot\prod_{i=1}^{N}\frac{3}{d_i-1}+1+\sum_{k=1}^{N-1}\prod_{i=1}^{k}\frac{3}{d_i-1}\right)^{-1}=\lim_{N\to\infty}\left(\frac{1}{x}\cdot a_N+1+\sum_{k=1}^{N-1}a_k\right)^{-1}\\=\left(1+\sum_{k}a_k\right)^{-1}>0\end{align*} since $a_N\to 0$ and $\sum_k a_k<\infty$.\par
        For the second implication, notice that if the sequence $a_n\to\infty$ then \begin{equation}\lim_{n\to\infty}\bigcirc_{i=1}^{n}F_{d_i}(x)\leq\lim_{n\to\infty}x\cdot\prod_{i=1}^{n}\frac{d_i-1}{3}=\lim_{n\to\infty}x\cdot\frac{1}{a_n}=0\end{equation} for all $x\in[0,1]$
    \end{proof}
    \begin{corollary}\label{main}
        Let ${d_i}$ be a sequence of real numbers with $d_i\geq 2$; if \begin{equation}\ln(\mu):=\lim_{n\to\infty}\frac{1}{n}\sum_{i=1}^{n}\ln((d_i-1)/3)=\lim_{n\to\infty}\frac{1}{n}\ln(a_n)>0\end{equation} then the infinite function composition \begin{equation}\bigcirc_{i=1}^{\infty}F_{d_i}(x)>0\end{equation} for all $x\in (0,1]$, and
        if \begin{equation}\lim_{n\to\infty}\frac{1}{n}\sum_{i=1}^{n}\ln((d_i-1)/3)\leq 0\end{equation} then \begin{equation}\bigcirc_{i=1}^{\infty}F_{d_i}(x)=0\end{equation} for all $x\in[0,1]$
    \end{corollary}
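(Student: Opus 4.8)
The plan is to convert the logarithmic Ces\`aro hypothesis on the degrees into the dichotomy ``$\sum_n a_n<\infty$'' versus ``$(a_n)$ unbounded'' for the partial products $a_n=\prod_{i=1}^{n}\frac{3}{d_i-1}$ that the preceding Lemma already resolves, and then to appeal to that Lemma. The starting identity is $\ln a_n=-\sum_{i=1}^{n}\ln\frac{d_i-1}{3}$, so the hypothesis is exactly the statement $\frac1n\ln a_n\to-\ln\mu$.

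If $\ln\mu>0$, then $\limsup_n a_n^{1/n}=e^{-\ln\mu}<1$, so $\sum_n a_n<\infty$ by the root test and the first half of the Lemma gives $\bigcirc_{i=1}^{\infty}F_{d_i}(x)>0$ for every $x\in(0,1]$. If $\ln\mu<0$, then $x=0$ is immediate from $F_{d_i}(0)=0$ (Theorem \ref{function}(a)), while for $x\in(0,1]$ the hypothesis forces $a_n^{1/n}\to e^{-\ln\mu}>1$, hence $a_n\to\infty$, and the second half of the Lemma gives $\bigcirc_{i=1}^{\infty}F_{d_i}(x)=0$. One should notice that this last step does not really need full divergence: from the semigroup identity $\bigcirc_{i=1}^{n}F_{d_i}=\big(\bigcirc_{i=1}^{m}F_{d_i}\big)\circ\big(\bigcirc_{i=m+1}^{n}F_{d_i}\big)$, together with $|F_d|<1$ (Theorem \ref{function}(b)) and the iterated bound $\big|\bigcirc_{i=1}^{m}F_{d_i}(z)\big|\le|z|\prod_{i=1}^{m}\frac{d_i-1}{3}=|z|/a_m$ coming from the left inequality in Theorem \ref{function}(g), one gets $\big|\bigcirc_{i=1}^{n}F_{d_i}(x)\big|<1/a_m$ for all $n>m$, hence $\limsup_n\big|\bigcirc_{i=1}^{n}F_{d_i}(x)\big|\le1/\limsup_m a_m$, so the composition vanishes as soon as $(a_m)$ is merely unbounded; this already covers all of $\ln\mu<0$ and the part of $\ln\mu=0$ in which $(a_m)$ is unbounded.

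The only case left over, and the one I expect to be the main obstacle, is $\ln\mu=0$ with $(a_m)$ bounded, where neither the bound $1/a_m$ nor a root-test argument is of any use. The tool here must be the \emph{strict} inequality $|F_d(t)|<\frac{d-1}{3}t$ on $(0,1]$ (strict because $F_d$ is not affine for $d>2$; part (g) gives the inequality and part (e) the monotonicity of $t\mapsto|F_d(t)|/t$): one wants to feed this gain into the iterated estimate $\big|\bigcirc_{i=1}^{n}F_{d_i}(x)\big|\le x/a_n$ so that the product picks up, at every step where the nested argument stays bounded below, a factor bounded away from $1$, forcing the composition to $0$ whenever $\sum_n a_n=\infty$. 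I would also flag one genuine subtlety of the statement as phrased: when $\ln\mu=0$ one can arrange $\sum_n a_n<\infty$ (for instance with $a_n\sim n^{-2}$), and then the first half of the Lemma gives $\big|\bigcirc_{i=1}^{\infty}F_{d_i}(x)\big|\ge\big(1+\sum_n a_n\big)^{-1}>0$; so for a clean statement the second alternative should really read ``$\ln\mu<0$'' (or one should impose a mild boundedness hypothesis on the degrees), and with that proviso the reduction above closes the argument.
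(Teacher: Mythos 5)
Your treatment of the two non-borderline regimes is correct and essentially coincides with the paper's: the paper also converts the Ces\`aro hypothesis into geometric control of $a_n$ (by an $\epsilon$-of-room estimate rather than your root test --- and with a sign slip, writing $a_n\ge\mu^{(1-\epsilon)n}$ where $a_n\le\mu^{-(1-\epsilon)n}$ is meant) and then reruns the telescoped bound from part (g) of Theorem \ref{function} exactly as in the Lemma. Your further observation that mere \emph{unboundedness} of $(a_m)$ already forces the composition to vanish --- via the splitting $\bigcirc_{i=1}^{n}F_{d_i}=(\bigcirc_{i=1}^{m}F_{d_i})\circ(\bigcirc_{i=m+1}^{n}F_{d_i})$, the bound $|F_d|<1$, and $|\bigcirc_{i=1}^{m}F_{d_i}(z)|\le|z|/a_m$ --- is a genuine strengthening that the paper does not have.

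More importantly, you are right to flag the boundary case: the second alternative as stated, with hypothesis $\ln\mu\le 0$, is false. Choosing $d_i$ so that $a_n\sim n^{-2}$ (which is compatible with $d_i\ge 2$, since it only requires $d_n$ near $4$) gives $\frac1n\ln a_n\to 0$, hence $\ln\mu=0$, yet $\sum_n a_n<\infty$, and the first half of the Lemma then bounds $|\bigcirc_{i=1}^{\infty}F_{d_i}(x)|$ below by $(1+\sum_k a_k)^{-1}>0$. The paper's own proof conceals this: the claimed estimate $1/a_n\le\mu^{(1-\epsilon)n}$ is vacuous when $\mu=1$, and the concluding step $\lim_N x/a_N=0$ needs $a_N\to\infty$, which $\ln\mu=0$ does not supply. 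Your proposed repair --- replace $\le 0$ by $<0$, or hypothesize directly that $(a_n)$ is unbounded --- is the correct fix; the speculative strict-inequality refinement for the subcase $\sum_n a_n=\infty$ with $(a_n)$ bounded is not needed once the hypothesis is amended, and, as your counterexample shows, no argument can rescue the statement in the form given.
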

    \begin{proof}
    Note that since $\lim_{n\to\infty}\frac{1}{n}\sum_{i=1}^{n}\ln((d_i-1)/3)=\ln(\mu)>0$ we have that for any $\epsilon>0$ we have that there exists an $N$ such that $\frac{1}{n}\sum_{i=1}^{n}\ln((d_i-1)/3)>(1-\epsilon)\ln(\mu)>0$ for all $n\geq N$; thus we have that $a_n\geq\mu^{(1-\epsilon)n}$; thus 
    \begin{align*}\lim_{N\to\infty}\bigcirc_{i=1}^{N}F_{d_i}(x)\geq \left(1+\sum_{k=1}^{N}\prod_{i=1}^{k}\frac{3}{d_i-1}+\sum_{k=N+1}^{\infty}\prod_{i=1}^{k}\frac{3}{d_i-1}\right)^{-1}\\\geq\left(1+\sum_{k=1}^{N}\prod_{i=1}^{k}\frac{3}{d_i-1}+\frac{1-\mu^{(\epsilon-1)(N+2)}}{1-\mu^{\epsilon-1}}\right)^{-1}>0\end{align*}
    Similarly note that if $\ln(\mu)\leq 0$ then $\frac{1}{a_n}\leq \mu^{(1-\epsilon)n}$ for large enough $n$, so we have \begin{equation}\lim_{N\to\infty}\bigcirc_{i=1}^{N}F_{d_i}(x)\leq \lim_{N\to\infty}\frac{1}{a_N}\cdot x=0\end{equation} for all $x\in [0,1]$
    \end{proof}
    Note that we do not have a uniform lower bound on the functional composition in the first part of the proof of Corollary \ref{main}. One tantalizing generalization is if we can uniformly lower bound this for trees that are not characterized by a single sequence $\{d_i\}$ of degrees uniformly radiating from the origin. However we have a counterexample to this.
    \begin{counterexample}
        For any $d\geq 5$ define the following tree via the sequence of degrees $\{d_i\}$ as $d_i=2$ for all $i\leq N$ and $d_i=5$ for all $i>N$. Then we have $\lim_{n\to\infty}\bigcirc_{i=1}^{n}F_{d_i}(x)=\frac{1}{(-3)^N}\cdot x_5$ where $x_5$ is the positive solution to $F_5(x_5)=-x_5$ which can be made as small as we like while maintaining that $\mu=\frac{4}{3}$.
    \end{counterexample} 
    This prevents us from using a simple macroscopic condition (such as all sequences obeying the $\mu$ condition) to obtain a bound for trees with arbitrary degrees; however we can still make this work for irregular trees with some added assumptions.
    \begin{theorem} Let $\Gamma$ be a tree with distinguished root $\overline{0}$ such that $\deg(x)=d_i$ for all $x$ such that $|x-\overline{0}|=i$. If \begin{equation}\ln(\mu)=\lim_{n\to\infty}\frac{1}{n}\sum_{i=1}^{n}\ln((d_i-1)/3)>0\end{equation} then the AKLT ground state on this tree is not unique.
    \end{theorem}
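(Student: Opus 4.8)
The plan is to run the second proof given for Cayley trees, with the single iterated map $F_d^{\circ n}$ replaced by the layer-dependent composition $\bigcirc_{i=1}^{n}F_{d_i}$, and then to quote Corollary~\ref{main}. First I would contract the transfer operator of the $n$-layer truncation $\Gamma_n$ one layer at a time. Its boundary $\partial\Gamma_n$ is the sphere of vertices at distance $n+1$, and because the degree of a vertex depends only on its distance from $\overline{0}$, the finite-volume transfer operator of $\Gamma_n$ is a tensor product of single-site operators $\tilde{\F}_{d_k}$, one for each vertex, grouped by its layer $k$. Feeding in the boundary condition $B_n(\mathbf{x})=\bigotimes_{y\in\partial\Gamma_n}\bigl(\id+(-1)^{|y-\overline{0}|}\mathbf{x}\cdot\boldsymbol{\sigma}\bigr)$ with $\|\mathbf{x}\|=t_0\in(0,1]$, and using the identity established earlier that a degree-$d$ vertex sends $B(\mathbf{y})^{\otimes(d-1)}$ to $\id+F_d(\|\mathbf{y}\|)\,\mathbf{y}/\|\mathbf{y}\|$, the crucial point is that spherical symmetry forces all $d_k-1$ inputs to a layer-$k$ vertex to carry the \emph{same} effective field, so the product form survives every contraction. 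Contracting layers $n,n-1,\dots,1$ in turn then replaces $B_n(t_0\mathbf{e}_1)$ (for a fixed unit vector $\mathbf{e}_1$) by the single effective operator $\id-c_n\,\mathbf{e}_1\cdot\boldsymbol{\sigma}$ on each of the $d_0$ legs of the root, where the direction stabilizes just as in the quasi-Cayley computation and $c_n=\bigl|\bigl(\bigcirc_{i=1}^{n}F_{d_i}\bigr)(t_0)\bigr|$.

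Next I would read off the expectation at the root. For a local observable $A$ supported at $\overline{0}$, the finite-volume expectation $\omega^{n}_{B_n(t_0\mathbf{e}_1)}(A)$ depends on the boundary only through this effective field, hence equals $g_A(c_n)$ for a fixed real-analytic function $g_A$ (the expectation of $A$ in the root reduced state of the finite-volume AKLT state when each of the root's legs carries $\id-c_n\mathbf{e}_1\cdot\boldsymbol{\sigma}$), while the free sequence gives $\omega^{n}_{B_n(\mathbf{0})}(A)=g_A(0)$ for every $n$. By Corollary~\ref{main}, since $\ln\mu>0$ the infinite composition $\bigcirc_{i=1}^{\infty}F_{d_i}(t_0)$ is nonzero, so $c_n\to c_\infty>0$; therefore the boundary sequence $B_n(t_0\mathbf{e}_1)$ produces the infinite-volume expectation $g_A(c_\infty)$ while $B_n(\mathbf{0})$ produces $g_A(0)$. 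Because $c_\infty\neq 0$, the root reduced state in the presence of the nonzero limiting field $-c_\infty\mathbf{e}_1$ differs from the undressed one---exactly the point already invoked in Corollary~\ref{quasicayley}---so $g_A(c_\infty)\neq g_A(0)$ for a suitable Pauli observable $A$ at the root. Thus two sequences of boundary conditions give different ground-state expectations for a local observable, and the infinite-volume AKLT ground state on $\Gamma$ is not unique.

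The step I expect to require the most care is the reduction in the first paragraph from ``a single vertex preserves the product boundary condition'' to ``the whole tree contraction does so'': this is precisely where spherical symmetry of the degree sequence is used, and it is why the hypothesis must be stated for a single radial sequence $\{d_i\}$ rather than for an arbitrary tree---the counterexample immediately preceding this theorem shows that a macroscopic growth condition alone does not suffice. The other ingredients are mild: convergence $c_n\to c_\infty$ is provided by Corollary~\ref{main} (and if one only wishes to exhibit a single weak-limit state distinct from the free one, it is enough to pass to a subsequence along which $c_n$ converges, which exists by compactness together with the positive liminf from that corollary), and the genuine dependence of the root reduced state on the boundary field is the same fact used in the Cayley-tree case.
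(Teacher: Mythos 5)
Your proposal is correct and follows essentially the same route as the paper: both reduce the problem to the layer-dependent composition $\bigcirc_{i=1}^{n}F_{d_i}$ acting on a spherically symmetric (staggered) product boundary condition, invoke Corollary \ref{main} for the nonvanishing of the limiting effective field at the root, and conclude by comparing against the trivial boundary sequence. The only immaterial difference is that the paper chooses $n$-dependent boundary values solving the fixed-point equation $\bigcirc_{i=1}^{n}F_{d_i}(x)=-x$ and extracts a convergent subsequence of those solutions, whereas you fix a single $t_0$ and extract a convergent subsequence of the iterates $c_n$.
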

    \begin{proof}
    We note that this condition immediately implies that the infinite composition $\bigcirc_{i=1}^{\infty}F_{d_i}(x)>0$ for $x\in(0,1]$ by the previous lemma; let $N$ be such that for all $n\geq N$ we have that $\bigcirc_{i=1}^{n}F_{d_i}(x)=-x$ has at least one, solution; let $x_n$ be the smallest solution of $\bigcirc_{i=1}^{n}F_{d_i}(x)=-x$; this sequence is bounded on $(\epsilon,1]$ for some $\epsilon>0$ and so has a convergent subsequence $x_{n_k}\to x$ with $x\neq 0$. Let $\mathbf{x}_{n_k}=x_{n_k}\sigma_1$; then the ground state expectations of an observable $A$ with support on $\overline{0}$ of a sequence of boundary conditions \begin{align*}\lim_{k\to\infty}\langle\psi_{n_k}(B_{T_{n_k}}(\mathbf{x}_{n_k})),A\psi_{n_k}(B_{T_{n_k}}(\mathbf{x}_{n_k}))\rangle=\lim_{k\to\infty}\langle\psi_0(\tilde{\F}_{T_{n_k}}(B_{T_{n_k}}(\mathbf{x}_{n_k}))),A\psi_0(\tilde{\F}_{T_{n_k}}(B_{T_{n_k}}(\mathbf{x}_{n_k}))\rangle\\=\langle\psi_0(\id+x\sigma_1),A\psi_0(\id+x\sigma_1)\rangle\neq \lim_{k\to\infty}\langle\psi_{n_k}(B_{T_{n_k}}(\mathbf{0})),A\psi_{n_k}(B_{T_{n_k}}(\mathbf{0})\rangle=\langle\psi_0(\id),A\psi_0(\id)\rangle\end{align*}
    \end{proof}
    We can also use the above method to give a result for a general irregular tree with a simple condition.
    \begin{theorem}\label{maintree}
    Let $\Gamma$ be a tree such that there exists a $\mu>1$ and $C>0$ such that for every sequence $\{d_{i_k}\}$ such that $|d_{i_k}-\overline{0}|=k$ we have that $\prod_{k=1}^n \frac{d_{i_k}-1}{3}\geq C\mu^n$. Then the ground state of this model is not unique.
    \end{theorem}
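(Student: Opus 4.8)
The plan is to follow the second proof of the Cayley-tree theorem (the one built on Theorem~\ref{function}(g)), but with the product boundary condition replaced by a coherent antiferromagnetic one, so that the perturbation that the \emph{whole} subtree below a vertex delivers to that vertex can be controlled layer by layer. Fix any $t_0\in(0,1]$ and, writing $\chi(y)$ for the parity of $|y-\overline 0|$ (this is the bipartition of the tree $\Gamma$), use $B_n(t_0)=\bigotimes_{y\in\partial T_n}\bigl(\id+(-1)^{\chi(y)}t_0\sigma_1\bigr)$. Since $\tilde{\F}$ is $SU(2)$-covariant, every factor $\id\pm t_0\sigma_1$ commutes with $\sigma_1$, and in a tree a whole layer lies in a single bipartition class, so the boundary condition propagated to layer $i$ is again of the form $\bigotimes_{v\in\caV^i_{\Gamma}}(\id+\beta_v\sigma_1)$ with all $\beta_v$ of one (layer-dependent) sign; flipping an overall sign (using that $F_d$ is odd) we may assume $\beta_v\ge 0$ throughout, and complete positivity of $\tilde{\F}$ forces $\beta_v\in[0,1]$. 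As in Corollary~\ref{quasicayley} and the previous theorem it then suffices to exhibit a constant $\delta>0$ with $\|\tilde{\F}_{T_n}(B_n(t_0))-\id\|\ge\delta$ for all $n$; comparing the weak limits along the sequences $B_n(t_0)$ and $B_n(0)\equiv\id$ against a local observable at $\overline 0$ then yields non-uniqueness.

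First I would record a one-step monotonicity lemma for the single-site channel. Writing $\tilde{\F}_d\bigl(\bigotimes_{j=1}^{d-1}(\id+c_j\sigma_1)\bigr)=\id+\frac{P_d(\mathbf c)}{Q_d(\mathbf c)}\sigma_1$, the relation $\sigma_1^2=\id$ and Theorem~\ref{comb} force $P_d$ and $Q_d$ to be symmetric and affine in each $c_j$, with $P_d=-\sum_{k\ \mathrm{odd}}\frac{1}{k+2}e_k(\mathbf c)$ and $Q_d=\sum_{k\ \mathrm{even}}\frac{1}{k+1}e_k(\mathbf c)$ up to a common scalar, and setting all $c_j=t$ recovers $F_d(t)$. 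The claim is that $\mathbf c\mapsto P_d(\mathbf c)/Q_d(\mathbf c)$ is nonincreasing in each $c_j$ on $[0,1]^{d-1}$. Since $P_d,Q_d$ are affine in $c_j$, the numerator of $\partial_{c_j}(P_d/Q_d)$ does not depend on $c_j$; peeling off the variable $c_j$ and using $\partial_{c_j}P_d=-R$, $\partial_{c_j}Q_d=-P_{d-1}$, $P_d|_{c_j=0}=P_{d-1}$, $Q_d|_{c_j=0}=Q_{d-1}$ in the remaining variables $\mathbf c'$ (here $P_{d-1},Q_{d-1}$ are the same polynomials in one fewer variable and $R=\sum_{k\ \mathrm{even}}\frac{1}{k+3}e_k$), that numerator equals $P_{d-1}(\mathbf c')^2-R(\mathbf c')Q_{d-1}(\mathbf c')$, so monotonicity reduces to the pointwise Cauchy--Schwarz-type inequality $P_{d-1}(\mathbf c')^2\le R(\mathbf c')Q_{d-1}(\mathbf c')$ on $[0,1]^{d-2}$; this follows from Newton's inequalities for the $e_k$ together with log-convexity of the sequence $\{1/(k+1)\}$, the restriction to $[0,1]$ leaving ample room (the one- and two-variable cases already display this). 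Granting it, decreasing each child's perturbation to the minimum cannot increase the output, so for any $c_1,\dots,c_{d-1}\in[0,1]$,
\[
\Bigl|\tfrac{P_d(\mathbf c)}{Q_d(\mathbf c)}\Bigr|\ \ge\ \bigl|F_d(\min_{j} c_j)\bigr|.
\]

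Next I would run the induced scalar recursion over $\Gamma$. Let $b_v(m)\in[0,1]$ be the magnitude of the $\sigma_1$-component propagated to $v$ from its depth-$m$ descendant subtree with boundary $B(t_0)$; then $b_v(0)=t_0$, and applying the displayed inequality with $c_j=b_{v_j}(m-1)$ over the children $v_j$ of $v$ gives $b_v(m)\ge\bigl|F_{\deg(v)}(\min_j b_{v_j}(m-1))\bigr|$. Unrolling from the leaves of $T_n$ inward, following at each step a child realizing the minimum and using that $|F_d|$ is increasing, yields
\[
b_{\overline 0}(n)\ \ge\ \inf_{\rho}\ \bigl|F_{d_{i_1}}\circ F_{d_{i_2}}\circ\cdots\circ F_{d_{i_n}}(t_0)\bigr|,
\]
the infimum over all rays $\rho=(\overline 0,w_1,w_2,\dots)$ of $\Gamma$ with $d_{i_k}=\deg(w_k)$. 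Each composition is bounded below by the iterate estimate from the second Cayley proof, itself a consequence of the right-hand bound of Theorem~\ref{function}(g):
\[
\bigl|F_{d_{i_1}}\circ\cdots\circ F_{d_{i_n}}(t_0)\bigr|\ \ge\ \Bigl(\tfrac{1}{t_0}\prod_{k=1}^{n}\tfrac{3}{d_{i_k}-1}+1+\sum_{j=1}^{n-1}\prod_{k=1}^{j}\tfrac{3}{d_{i_k}-1}\Bigr)^{-1}.
\]
By hypothesis $\prod_{k=1}^{j}\frac{3}{d_{i_k}-1}\le\frac{1}{C}\mu^{-j}$ for every ray and every $j$, so the parenthesized quantity is at most $\frac{1}{Ct_0}+1+\frac{1}{C(\mu-1)}$, uniformly in $n$ and $\rho$; hence $b_{\overline 0}(n)\ge\delta:=\bigl(\frac{1}{Ct_0}+1+\frac{1}{C(\mu-1)}\bigr)^{-1}>0$. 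Passing to a subsequence along which the $\sigma_1$-coefficient $\beta_n$ of $\tilde{\F}_{T_n}(B_n(t_0))-\id$ converges to some $\beta$ with $|\beta|\ge\delta$, and picking $A$ supported at $\overline 0$ with $\langle\psi_0(\id+\beta\sigma_1),A\psi_0(\id+\beta\sigma_1)\rangle\neq\langle\psi_0(\id),A\psi_0(\id)\rangle$, finishes the proof.

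The hard part is the one-step monotonicity of $P_d/Q_d$ on $[0,1]^{d-1}$: the remainder is a routine unrolling of a scalar recursion, essentially identical to the previous (layered-degree) theorem, but the reduction of the multi-child site channel to a worst-child scalar channel is exactly the step the counterexample warns is delicate, and it is precisely where the explicit coefficients of Theorem~\ref{comb}, and the fact that one only needs $\mathbf c\in[0,1]^{d-1}$, have to be exploited.
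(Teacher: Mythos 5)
Your proposal follows essentially the same route as the paper: reduce the multi-input site channel to the scalar function $F_d$ evaluated along the worst ray (the paper's bound $f_d(x_1,\dots,x_n)\geq f_d(x_\alpha)$ for the minimal sequence), then control the resulting composition with the iterate estimate from Theorem~\ref{function}(g) and the hypothesis $\prod_k\frac{3}{d_{i_k}-1}\leq C^{-1}\mu^{-k}$ to get a uniform lower bound and conclude via distinct weak limits. The only difference is that you isolate and attempt to justify the worst-child monotonicity of $P_d/Q_d$ on $[0,1]^{d-1}$, which the paper asserts without proof; your sketch of that sub-lemma via Newton's inequalities is not fully airtight (the diagonal terms $e_k^2$, $k$ odd, are exactly where Newton's inequalities point the wrong way), but this does not constitute a departure from the paper's argument, which leaves the same step unproved.
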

    \begin{proof}
    For each $n$ let $\alpha_k$ be the index of the leaf corresponding to the minimal sequence \begin{equation}\min_{\{i_k\}}\left(\prod_{k=1}^n \frac{d_{i_k}-1}{3}\right)\end{equation} use the bound that $f_{d}(x_1,...,x_n)\geq f_d(x_\alpha)$ and thus the composition \begin{equation}f_{d_{i_0}}(f_{d_{i_1}}(f_{d_{i_2}}(...)))\geq \bigcirc_{k=1}^nf_{d_{\alpha_k}}(x)\end{equation} and from the above condition we know that from the bounds previously used that \begin{equation}\bigcirc_{k=1}^nf_{d_{\alpha_k}}(x)\geq \left(\sum_{m=1}^n\prod_{k=1}^m\frac{3}{d_{a_k}-1}+1\right)^{-1}\geq \left(C\sum_{m=1}^n\mu^{-m}+1\right)^{-1}\geq\left(\frac{C\mu^{-1}}{1-\mu^{-1}}+1\right)^{-1}>0\end{equation} Thus the concatenations of the transfer functions are uniformly bounded away from zero so there exists a convergent subsequence that is positive and one that is negative, so the ground state is not unique.
    \end{proof}
    Note that in this last proof we could not use the geometric mean condition for the minimal sequence, as the functional compositions associated to the set of such sequences is not uniformly bounded below as was shown in Counterexample 1.\par
    One can see the constant $3\mu$ as something of a "connective" constant in that it quantifies the exponential growth of all the self-avoiding paths from the root to the leaves in the tree.
\end{section}
\begin{section}{Bilayer trees}
The case of bilayer trees is significantly different, and requires a slightly different analysis; however the same general fixed-point type argument exists.
\subsection*{Bilayer Cayley tree}
\begin{figure}[t]
    \centering
    \tdplotsetmaincoords{70}{110}
\begin{tikzpicture}[tdplot_main_coords,scale=1.5]

    \def\r{2} 
    \def\h{3} 

    \coordinate (center1) at (0, 0, 0); 
    \foreach \i in {1, 2, 3, 4, 5} {
        \coordinate (lower\i) at ({\r*cos(\i*72)}, {\r*sin(\i*72)}, 0);
        \draw[thick] (center1) -- (lower\i);
        \node[circle, fill=black, inner sep=2pt] at (lower\i) {};
    }
    \node[circle, fill=black, inner sep=2pt] at (center1) {};

    \coordinate (center2) at (0, 0, \h); 
    \foreach \i in {1, 2, 3, 4, 5} {
        \coordinate (upper\i) at ({\r*cos(\i*72)}, {\r*sin(\i*72)}, \h);
        \draw[thick] (center2) -- (upper\i);
        \node[circle, fill=black, inner sep=2pt] at (upper\i) {};
    }
    \node[circle, fill=black, inner sep=2pt] at (center2) {};

    \draw[thick] (center1) -- (center2);
    \node[circle, fill=black, inner sep=2pt] at (center1) {};
    \node[circle, fill=black, inner sep=2pt] at (center2) {};

\end{tikzpicture}
    \caption{First layer of bilayer Cayley tree of degree $d=6$ which has splitting number $g=d-2=4$}
    \label{fig:bilayer}
\end{figure}
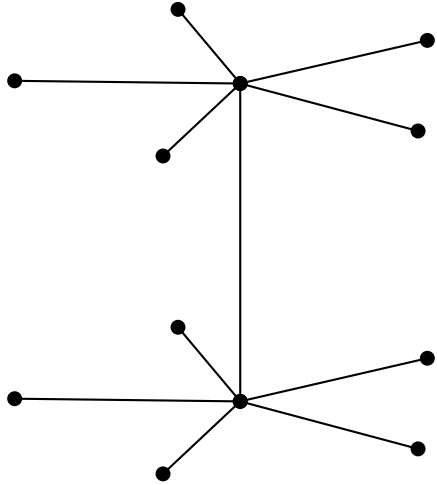

The bilayer tree is substantially different. This is because we have non-trivial operators which commute with representations of SU$(2)$. Note that in the bilayer tree, if we have the transfer operator at each (bilayer) node in the tree $\tilde{\F}:M_2(\mathbb{C})^{\otimes 2(d-1)}\to M_2(\mathbb{C})^{\otimes 2}$ we look for a pair of density matrices $\rho_0,\rho_1\in M_2(\mathbb{C})^{\otimes 2}$ such that $T(\rho_i^{\otimes (d-1)})=\rho_{1-i}$. Note that within $M_2(\mathbb{C})^{\otimes 2}$ we have the Casimir $\sum_i \sigma_i\otimes\sigma_i$ which commutes with the $SU(2)$ representation so we cannot have that $T(\mathbbm{1}\otimes\mathbbm{1})=\mathbbm{1}$; however $T$ is still trace-preserving. Thus if we define $\boldsymbol{\sigma}=[\sigma_{lk}]$ where $\sigma_{lk}=\sigma_l\otimes\sigma_k$ with $l,k\in\{0,1,2,3\}$; then our boundary conditions $B$ can be written as $\mathbbm{1}\otimes\mathbbm{1}+\mathbf{x}\cdot\boldsymbol{\sigma}$. We find fixed points of the transfer operator, the coefficients of which are rational functions of this vector.
For simplicity of notation, for a bilayer tree of degree $d$ we call $g-2$ the \textbf{splitting number} of the tree. 
\begin{theorem}
    The bilayer Cayley tree has a unique ground state for $g=1,2$.
\end{theorem}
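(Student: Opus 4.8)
The plan is to run the Cayley-tree argument of Section~3 (Proof~1) in the bilayer setting, the only genuinely new ingredient being the two-dimensional space of $\mathrm{SU}(2)$-invariants inside $M_2(\mathbb{C})^{\otimes 2}$. As in the single-layer case it is enough to analyze the trace-normalized transfer operator $\tilde{\F}$ on the product boundary conditions $B(\mathbf x)=(\id\otimes\id+\mathbf x\cdot\boldsymbol\sigma)^{\otimes(d-1)}$ and to prove that, for $d=3,4$ (that is, splitting number $g=1,2$), the iterate $\tilde{\F}_{T_n}(B(\mathbf x))$ converges as $n\to\infty$ to a limit independent of $\mathbf x$; together with Theorem~\ref{klt} and the definition of the infinite-volume states this yields uniqueness. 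First I would decompose $M_2(\mathbb{C})^{\otimes 2}=\mathrm{span}\{\sigma_l\otimes\sigma_k\}$ under the diagonal adjoint $\mathrm{SU}(2)$ action: it splits as two copies of the trivial representation (spanned by $\id\otimes\id$ and the Casimir $C=\sum_{i=1}^{3}\sigma_i\otimes\sigma_i$), three copies of the spin-$1$ representation (carried by $\sigma_i\otimes\id$, $\id\otimes\sigma_i$, and the antisymmetric part of $\sigma_i\otimes\sigma_j$), and one copy of the spin-$2$ representation (the symmetric traceless part of $\sigma_i\otimes\sigma_j$), for a total of $2\cdot 1+3\cdot 3+1\cdot 5=16$. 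Since $\tilde{\F}$ is $\mathrm{SU}(2)$-equivariant, both its restriction to the invariants and its derivative at any invariant point respect this decomposition, so in each isotypic sector the dynamics is governed by a small matrix of rational functions of $d$ obtained from the single-site formula of Theorem~\ref{comb}.

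The second step is to locate the reference point that replaces $\id\otimes\id$. On the trivial sector trace-preservation forces $\rho=\tfrac14\,\id\otimes\id+bC$, so the transfer map induces a one-dimensional degree-$(d-1)$ recursion $b\mapsto\beta_d(b)$ alternating between the two colors of the bipartition; positivity and trace-preservation of $\tilde{\F}$ guarantee a physical fixed $2$-cycle $(\rho_0,\rho_1)$, the bilayer analogue of the Cayley-tree reference state, the essential novelty being that $\rho_i\neq\id\otimes\id$ because of the $C$-component. Third, I would linearize the two-step transfer map (two steps, to match the bipartition) about this $2$-cycle. In the spin-$1$ and spin-$2$ sectors --- the antiferromagnetic order-parameter directions --- this produces explicit multiplier matrices, while the direction transverse to the $2$-cycle inside the trivial sector gives one further multiplier. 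The goal is to show that for $d=3,4$ every eigenvalue of every one of these multipliers has modulus $\le 1$, with strict inequality except possibly a single borderline spin-$1$ eigenvalue at $d=4$; that borderline case is then disposed of exactly as in the single-layer $d=4$ situation, using the convexity statement of Theorem~\ref{function}: the quadratic term of the induced map pushes the iterate back toward the $2$-cycle, so convergence still holds. Strict contraction in all order-parameter directions forces $\tilde{\F}_{T_n}(B(\mathbf x))$ to the reference $n$-cycle for every $\mathbf x$, hence all infinite-volume ground states coincide and the model is unique for $g=1$ and $g=2$.

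The hard part is the explicit computation of the spin-$1$ (and spin-$2$) multipliers. For the Cayley tree the relevant quantity was simply $F_d'(0)=(1-d)/3$; here one must (i) compose two transfer steps, (ii) linearize around a reference state carrying a nonzero Casimir component rather than around $\id\otimes\id$, and (iii) follow how a perturbation localized in one of the $d-1$ incoming bilayer rungs is propagated through the internal rung edge of the cell and then summed over the $d-1$ children. I expect the resulting two-step spin-$1$ multiplier to be comparable to $\big(\tfrac{d-1}{3}\big)^2$ up to a $d$-independent combinatorial factor coming from the rung, so that the uniqueness threshold again sits at $d=4$ (borderline) and $d\ge 5$ (ordered), exactly as for the single-layer tree --- consistent with the $d=5$ non-uniqueness discussed afterwards. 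Pinning down that rung factor and the precise constants --- most cleanly via a labeled-loop-diagram expansion in the spirit of Section~4, adapted to a cell with two outgoing indices --- and then verifying the eigenvalue bounds at $d=3,4$ is where the real work lies. A secondary point is to upgrade the statement from product boundary conditions to arbitrary $B_n\in M_2^{\otimes|\partial T_n|}$: one expands $B_n$ in the Hilbert--Schmidt/isotypic basis and applies the same block multipliers, just as in the one-dimensional theory.
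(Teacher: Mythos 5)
Your overall strategy --- treat the bilayer rung algebra $M_2(\bbC)^{\otimes 2}$ as the new edge space, decompose it into $\mathrm{SU}(2)$ isotypic components (two trivials, three spin-$1$'s, one spin-$2$), locate the symmetric reference state, and show all order-parameter directions contract for $d=3,4$ --- is the same fixed-point analysis the paper carries out, just organized by representation theory rather than by writing the map in the raw Pauli basis $\sigma_l\otimes\sigma_k$. The difficulty is that your proposal stops exactly where the proof begins: you never compute the multipliers, and you explicitly flag the rung combinatorial factor and the eigenvalue bounds at $d=3,4$ as ``where the real work lies.'' The paper's proof consists precisely of that work: it writes the rational functions $f_{0i},f_{i0},f_{ij},f_{ii}$ explicitly, derives the fixed-point system $f_{kl}(\mathbf{x})=\pm x_{kl}$, and checks that its only solutions are $\mathrm{SU}(2)$-invariant (for $g=1$, $x_{0i}=x_{i0}=x_{ij}=0$ with $\sum_i x_{ii}=-4+\sqrt{19}$; for $g=2$, only $\mathbf{x}=\mathbf{0}$). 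Your heuristic that the two-step spin-$1$ multiplier should be comparable to $\left(\tfrac{d-1}{3}\right)^2$ up to a rung factor is not a substitute for this: the rung factor is exactly the quantity that decides whether the threshold sits at $d=4$ or $d=5$, and the paper's closing remark shows the bilayer threshold genuinely differs from the naive single-layer count (single-layer $d_s=4$ has the same splitting number as bilayer $d_b=5$ but opposite behavior), so the constant cannot be guessed from the single-layer formula.

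A second, independent gap: a spectral bound on the linearization at the symmetric reference state gives only local convergence. To conclude that \emph{every} sequence of product boundary conditions converges to the same limit you need a global statement --- either that the full nonlinear fixed-point/$2$-cycle system has no symmetry-breaking solutions anywhere in the physical region, which is what the paper verifies by solving the explicit polynomial system, or a global monotonicity/convexity argument in the style of Theorem \ref{function}. Your appeal to strict contraction in the order-parameter directions near the reference cycle does not by itself exclude a symmetry-breaking $2$-cycle elsewhere in the unit ball. The isotypic decomposition is a genuinely cleaner bookkeeping device than the paper's raw coordinates, and your observation that trace preservation no longer forces $\tilde{\F}(\id\otimes\id)=\id\otimes\id$ because of the Casimir correctly identifies the essential new feature; but as written the proposal is a program, not a proof.
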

\begin{proof}
The case of $g=1$ is the case of a bilayer AKLT chain which has a unique ground state \cite{aklt}. However, by way of an example, we prove this using a fixed point method. Note that the transfer operator can be fully diagonalized so that, starting with a boundary condition \begin{equation}B=\mathbbm{1}\otimes\mathbbm{1}+\sum_i x_{0i}\mathbbm{1}\otimes\sigma_i+\sum_ix_{i0}\sigma_i\otimes \mathbbm{1}+\sum_{i\neq j}x_{ij}\sigma_i\otimes\sigma_j+\sum_i x_{ii}\sigma_i\otimes\sigma_i=\mathbbm{1}\otimes\mathbbm{1}+\mathbf{x}\cdot \boldsymbol{\sigma}\end{equation} then we have \begin{equation}\tilde{\F}(B)=\mathbbm{1}\otimes\mathbbm{1}+\sum_i f_{i0}(\mathbf{x})(\sigma_i\otimes\mathbbm{1})+\sum_i f_{0i}(\mathbf{x})(\mathbbm{1}\otimes \sigma_i)+\sum_{i\neq j} f_{ij}(\mathbf{x})(\sigma_i\otimes\sigma_j)+\sum_i f_{ii}(\mathbf{x})(\sigma_i\otimes\sigma_i)\end{equation} where \begin{equation}f_{0i}(\mathbf{x})=\frac{-\frac{1}{3}x_{0i}-\frac{1}{9}x_{i0}}{\left(1+\frac{\sum_i x_{ii}}{9}\right)},\qquad
f_{i0}(\mathbf{x})=\frac{-\frac{1}{3}x_{i0}-\frac{1}{9}x_{0i}}{\left(1+\frac{\sum_i x_{ii}}{9}\right)}\end{equation}\begin{equation}
f_{ij}(\mathbf{x})=\frac{\frac{1}{9}x_{ij}}{\left(1+\frac{\sum_i x_{ii}}{9}\right)},\qquad
f_{ii}(\mathbf{x})=\frac{1+\frac{x_{ii}}{9}}{\left(1+\frac{\sum_i x_{ii}}{9}\right)}\end{equation}
we search for a fixed point $f_{kl}(\mathbf{x})=\pm x_{kl}$ and so we have the set of equations \begin{equation}\pm x_{0i}\left(1+\frac{\sum_i x_{ii}}{9}\right)=-\frac{1}{3}x_{0i}-\frac{1}{9}x_{i0},\qquad
\pm x_{i0}\left(1+\frac{\sum_i x_{ii}}{9}\right)=-\frac{1}{3}x_{i0}-\frac{1}{9}x_{0i}\end{equation}
\begin{equation}\pm x_{ij}\left(1+\frac{\sum_i x_{ii}}{9}\right)=\frac{1}{9}x_{ij},\qquad
\pm x_{ii}\left(1+\frac{\sum_i x_{ii}}{9}\right)=1+\frac{x_{ii}}{9}\end{equation} which has solution $x_{i0}=x_{0i}=x_{ij}=0$ and \begin{equation}x_{ii}=\frac{\left(\frac{1}{9}+\frac{x_{ii}}{9}\right)}{\left(1+\frac{\sum_i x_{ii}}{9}\right)}\end{equation} or \begin{equation}\left(\frac{1}{9}+\frac{\sum_i x_{ii}}{9}\right)=\left(\sum_i x_{ii}\right)\left(1+\frac{\sum_i x_{ii}}{9}\right)\end{equation} which has solution $\left(\sum_i x_{ii}\right)^2+8\sum_i x_{ii}-3=0$ and thus $\sum_i x_{ii}=-4+\sqrt{19}$ which is within the unit sphere. This value, while not being nonzero, is a fixed point of $\tilde{\F}$ that does not break the SU$(2)$ symmetry; moreover, the functions $f$ converge exponentially quickly to this fixed point, so we have a unique ground state.\par
For $g=2$, like before, we have our bilayer transfer operator $\tilde{E}:M_2(\mathbb{C})^{\otimes 4}\to M_2(\mathbb{C})^{\otimes 2}$ where we can explicitly calculate for $B=\mathbbm{1}\otimes\mathbbm{1}+\mathbf{x}\cdot \sigma$ that we have \begin{equation}\tilde{\F}(B)=\mathbbm{1}\otimes\mathbbm{1}+\sum_i f_{i0}(\mathbf{x})(\sigma_i\otimes\mathbbm{1})+\sum_i f_{0i}(\mathbf{x})(\mathbbm{1}\otimes \sigma_i)+\sum_{i\neq j} f_{ij}(\mathbf{x})(\sigma_i\otimes\sigma_j)+\sum_i f_{ii}(\mathbf{x})(\sigma_i\otimes\sigma_i)\end{equation} where each $f$ is a rational function of the vector $\mathbf{x}$. 
We have that we are solving the equation \begin{equation}\pm x_{kl}=f_{kl}(\mathbf{x})\end{equation} and so we get that we need to solve \begin{align*}\left(-\frac{2}{3}x_{i0}-\frac{2}{9}x_{0i}-\frac{4}{15}x_{ii}x_{i0}-\sum_{j\neq i}\frac{2}{45}x_{j0}y_{ij}-\sum_{j\neq i}\frac{2}{9}x_{i0}x_{ij}\right)\\\pm x_{i0}\left(1+\sum_{i}\frac{1}{3}x_{i0}^{2}+\frac{1}{9}x_{i0}x_{0i}+\sum_i\frac{2}{9}x_{ii}+\sum_{i}\frac{1}{9}x_{ii}^{2}+\sum_{i,j}\frac{1}{9}x_{ij}^{2}\right)=0\end{align*}
and
\begin{align*}\left(-\frac{2}{3}x_{0i}-\frac{2}{9}x_{i0}-\frac{4}{15}x_{ii}x_{0i}-\sum_{j\neq i}\frac{2}{45}x_{0j}y_{ji}-\sum_{j\neq i}\frac{2}{9}x_{0i}x_{ji}\right)\\\pm x_{0i}\left(1+\sum_{i}\frac{1}{3}x_{i0}^{2}+\frac{1}{9}x_{i0}x_{0i}+\sum_i\frac{2}{9}x_{ii}+\sum_{i}\frac{1}{9}x_{ii}^{2}+\sum_{i,j}\frac{1}{9}x_{ij}^{2}\right)=0\end{align*}
and
\begin{align*}\left(\frac{2}{9}x_{ij}+\frac{2}{9}x_{i0}x_{0j}+\frac{2}{45}x_{i0}x_{j0}+\frac{2}{75}x_{ij}x_{ii}+\frac{2}{75}x_{ij}x_{jj}+\frac{4}{225}x_{ij}x_{kk}+\frac{2}{225}x_{ik}x_{kj}\right)\\\pm x_{ij}\left(1+\sum_{i}\frac{1}{3}x_{i0}^{2}+\frac{1}{9}x_{i0}x_{0i}+\sum_i\frac{2}{9}x_{ii}+\sum_{i}\frac{1}{9}x_{ii}^{2}+\sum_{i,j}\frac{1}{9}x_{ij}^{2}\right)=0\end{align*}
and\begin{align*}\left(\frac{1}{9}+\frac{2}{9}x_{ii}+\frac{2}{9}x_{i0}x_{0i}+\frac{2}{15}x_{i0}^{2}+\sum_{j\neq i}\frac{2}{45}x_{i0}x_{j0}+\frac{1}{25}x_{ii}^{2}+\frac{2}{75}x_{ij}x_{ij}+\sum_{i,j}\frac{1}{225}x_{ij}^{2}+\sum_{i,j}\frac{2}{225}x_{ii}x_{jj}\right)\\\pm x_{ii}\left(1+\sum_{i}\frac{1}{3}x_{i0}^{2}+\frac{1}{9}x_{i0}x_{0i}+\sum_i\frac{2}{9}x_{ii}+\sum_{i}\frac{1}{9}x_{ii}^{2}+\sum_{i,j}\frac{1}{9}x_{ij}^{2}\right)=0\end{align*}
This set of equations only has the solution $\mathbf{x}=\mathbf{0}$ and since $f_ij'<1$ for all $i,j$ it has a unique limit. The same analysis as in \cite{fnw} can be done to show this ground state is unique.
\end{proof}
Next we show that we have non-uniqueness for splitting number $g=3$
\begin{theorem}
The AKLT Hamiltonian defined on the $g=3$ bilayer Cayley tree does not have a unique ground state.
\end{theorem}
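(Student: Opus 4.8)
The plan is to run, for $g=3$ (equivalently local degree $d=g+2=5$), the same fixed-point scheme that settled the cases $g=1,2$, except that here the relevant object is a \emph{symmetry-breaking} period-two orbit of the bilayer transfer operator rather than the SU$(2)$-symmetric one. The bilayer cell of the $g=3$ tree consists of two single sites of degree $5$ joined by a vertical bond, with three incoming bilayer bonds; contracting the vertical bond yields the trace-normalized transfer operator $\tilde{\F}\colon M_2(\mathbb{C})^{\otimes 6}\to M_2(\mathbb{C})^{\otimes 2}$, and on product boundary conditions $B(\mathbf{x})=\mathbbm{1}\otimes\mathbbm{1}+\mathbf{x}\cdot\boldsymbol{\sigma}$ with $\boldsymbol{\sigma}=[\sigma_l\otimes\sigma_k]$ one gets $\tilde{\F}(B(\mathbf{x}))=\mathbbm{1}\otimes\mathbbm{1}+\sum_{kl}f_{kl}(\mathbf{x})\,\sigma_l\otimes\sigma_k$, where each $f_{kl}$ is a ratio of polynomials in $\mathbf{x}$ of degree at most $g=3$: the unnormalized output $\F(\rho^{\otimes 3})$ is multilinear in the three input bonds, hence of degree $3$, and one divides by its trace. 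The first step is to assemble these polynomials explicitly by contracting the single-site weights of Theorem~\ref{comb} around the cell, exactly as was done for the displayed $g=1,2$ formulas.

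The second step is to shrink the fixed-point problem using symmetry. Since $\tilde{\F}$ is SU$(2)$-covariant and the cell is symmetric under exchanging the two layers, the subspace of boundary conditions fixed by all rotations about a single axis (say the $3$-axis) and by the layer exchange is $\tilde{\F}$-invariant; it is spanned by $\mathbbm{1}\otimes\mathbbm{1}$, $\sum_i\sigma_i\otimes\sigma_i$, $\sigma_3\otimes\mathbbm{1}+\mathbbm{1}\otimes\sigma_3$ and $\sigma_3\otimes\sigma_3$, so the equations $f_{kl}(\mathbf{x})=\pm x_{kl}$ collapse to a cubic system in only a handful of unknowns, of the same shape as the $g=1$ system displayed above. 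Taking the sign $-$ on the magnetization coordinate $\sigma_3\otimes\mathbbm{1}+\mathbbm{1}\otimes\sigma_3$ is what will make the two members of the orbit spin-flips of each other. This system always has the SU$(2)$-symmetric branch --- magnetization coordinates zero, scalar coordinates determined by a low-degree polynomial exactly as $\sum_i x_{ii}=-4+\sqrt{19}$ arose for $g=1$ --- and for $g=1,2$ it was the only solution, which is what gave uniqueness there.

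The crux is that for $g=3$ a second, symmetry-breaking branch appears, and the mechanism is the exact bilayer analogue of $F_d'(0)=\frac{1-d}{3}<-1$ at $d=5$. Linearizing $\tilde{\F}$ at the symmetric fixed point and restricting to the magnetization block yields a map whose relevant eigenvalue $\lambda$ satisfies $|\lambda|<1$ for $g=1,2$ (consistent with the coefficients $-\frac13,-\frac19$ appearing in the $g=1$ formulas for $f_{0i},f_{i0}$) but $|\lambda|>1$ for $g=3$. Complete positivity and trace preservation of $\tilde{\F}$ give the bilayer analogue of $F_\Lambda(1)\ge-1$, namely a uniform bound on the magnetization of $\tilde{\F}(B)$ over the positivity region $\{B(\mathbf{x})\ge0\}$; thus the reduced one-parameter map along the unstable direction starts out steeper than $-t$ at the origin yet lies above $-t$ near the edge of the positivity region, and the intermediate value theorem produces a nonzero orbit $\rho_0\neq\rho_1$ with $\tilde{\F}(\rho_i^{\otimes3})=\rho_{1-i}$ lying inside the set where both $B$ and $\tilde{\F}(B)$ are positive semidefinite. (If one prefers, the reduced cubic system can instead be solved in closed form and the extra root checked to lie in that set.) I expect this step to be the main obstacle: assembling $\tilde{\F}$ for the $g=3$ cell and then simplifying the fixed-point system enough to certify the symmetry-breaking root is a sizable computation, and the delicate point is verifying that the chosen subspace is genuinely $\tilde{\F}$-invariant so that the reduction is legitimate.

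Given the orbit, the passage to non-uniqueness is routine and copies Sections~3--4. Let $B_n,B_n'$ be the two N\'eel-type product boundary conditions on $\partial T_n$ that assign $\rho_0$ or $\rho_1$ to a boundary pair according to the parity of its layer, arranged so that $\tilde{\F}^{\circ n}$ carries $\psi_n(B_n)$ to $\psi_0(\rho_0)$ and $\psi_n(B_n')$ to $\psi_0(\rho_1)$ for every $n$. Since $\rho_0\neq\rho_1$ and the two differ in their magnetization, a root-supported observable $A$ --- for instance the component $\sigma_3\otimes\mathbbm{1}$ of the root spin, on which $\rho_1$ reads the opposite value to $\rho_0$ --- has $\langle\psi_0(\rho_0),A\psi_0(\rho_0)\rangle\neq\langle\psi_0(\rho_1),A\psi_0(\rho_1)\rangle$, exactly as distinct limiting boundary data separated root expectations in the Cayley-tree and quasi-Cayley-tree proofs. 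Hence the two sequences of finite-volume states have distinct infinite-volume limits, so the AKLT ground state on the $g=3$ bilayer Cayley tree is not unique, with the surviving magnetization $\langle\rho_0,\sigma_3\otimes\mathbbm{1}\rangle$ being precisely the long-range antiferromagnetic order.
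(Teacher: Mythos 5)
Your overall scheme is the same as the paper's: restrict the bilayer transfer operator $\tilde{\F}:M_2(\mathbb{C})^{\otimes 6}\to M_2(\mathbb{C})^{\otimes 2}$ to an invariant subspace of product boundary conditions $B(\mathbf{x})=\mathbbm{1}\otimes\mathbbm{1}+\mathbf{x}\cdot\boldsymbol{\sigma}$, find a symmetry-breaking period-two orbit $\tilde{\F}(B_{\pm}^{\otimes 3})=B_{\mp}$, and then separate infinite-volume expectations of a root observable exactly as in the Cayley-tree argument. The paper reduces to the isotropic three-parameter subspace $x_1=x_{0i}=x_{i0}$, $x_2=x_{ij}$ ($i\neq j$), $x_3=x_{ii}$, writes out the resulting cubic system $f_1=-x_1$, $f_2=x_2$, $f_3=x_3$ explicitly, and exhibits the orbit numerically, $\mathbf{x}_{\pm}\approx[\pm 0.3020,\,0.0466,\,0.1754]$; this is precisely your parenthetical fallback of ``solving the reduced cubic system and checking the extra root.'' Your preferred route --- linearize at the SU$(2)$-symmetric fixed point, show the magnetization eigenvalue exceeds $1$ in modulus for $g=3$, and invoke the intermediate value theorem against a positivity bound --- is a genuinely different existence argument, and it is more conceptual in that it would explain \emph{why} $g=3$ orders while $g=1,2$ do not. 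But as stated it has a gap: unlike the single-layer case, where $B(t\sigma_1)\mapsto B(F_d(t)\sigma_1)$ is honestly a one-parameter map, the bilayer fixed-point problem remains at least three-dimensional even after your single-axis symmetry reduction (the magnetization coordinate is coupled to the Casimir and $\sigma_3\otimes\sigma_3$ coordinates), so there is no ``reduced one-parameter map along the unstable direction'' to which IVT applies without further justification --- you would need a Brouwer-type fixed-point argument on a suitable invariant region, or to show the unstable manifold can be parametrized and controlled out to the boundary of the positivity region. You should also verify that your single-axis subspace is actually $\tilde{\F}$-invariant (the paper uses the isotropic one instead); and note that the paper's own existence step is only a numerical check, so if you carry out the closed-form/certified-root version you would in fact be on firmer ground than the published argument.
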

\begin{proof}
We start with the ansatz that we have a boundary condition of the form
\begin{equation}B(\mathbf{x})=\mathbbm{1}\otimes\mathbbm{1}+\mathbf{x}\cdot\boldsymbol{\sigma}\end{equation} where we note that the space of boundary conditions such that $x_1=x_{0i}=x_{i0}$ and $x_2=x_{ij}=x_{kl}$ and $x_3=x_{ii}=x_{jj}$ is a fixed subspace, so we assume $\mathbf{x}$ satisfies these and search for two boundary conditions $B_+$ and $B_-$ that are solutions to \begin{equation}\tilde{\F}(B_{\pm}(\mathbf{x})^{\otimes 3})=B_{\mp}(\mathbf{x})\end{equation} which means \begin{equation}f_{0i}(\mathbf{x})=f_1(x_1,x_2,x_3)=-x_1\end{equation}
\begin{equation}f_{ij}(\mathbf{x})=f_2(x_1,x_2,x_3)=x_2\end{equation}
\begin{equation}f_{ii}(\mathbf{x})=f_3(x_1,x_2,x_3)=x_3\end{equation}
where $f_1,f_2,f_3$ are rational functions; define $f_0(\mathbf{x})=\Tr(\F(B_{\pm}(\mathbf{x})^{\otimes 3}))$, so we can rearrange by searching for a root of the following polynomials:
\begin{align*}f_1(\mathbf{x})+x_1\cdot f_0(\mathbf{x})=-3x_1 - \frac{37}{15}x_1^3 - \frac{2}{5}x_1x_2^2 - \frac{8}{15}x_1x_3 - \frac{4}{25}x_1x_2x_3 - \frac{14}{75}x_1x_2^2 + \frac{8}{75}x_1x_3^2\\ + \frac{2}{25}x_1x_3^{2} + \frac{22}{25}x_1x_2x_3 + \frac{3}{25}x_3^{3} + \frac{4}{75}x_2^{3} + \frac{8}{25}x_2^{2} + 2x_2^2 + 2x_3 + 6x_1^2 = 0\end{align*}
and \begin{align*}
    f_2(\mathbf{x})-x_2\cdot f_0(\mathbf{x})=-\frac{2}{3} + \frac{7}{5}x_1^2 + \frac{14}{75}x_2^2 + \frac{6}{25}x_3 + \frac{1}{15}x_3^2 + \frac{11}{75}x_2x_3\\ + \frac{22}{75}x_2x_1^2 + \frac{4}{25}x_3x_1^2 + \frac{4}{25}x_2x_1^2 + \frac{4}{75}x_3x_2^2 + \frac{2}{25}x_2^3 + \frac{1}{25}x_3^3 = 0
\end{align*}
and
\begin{equation}f_3(\mathbf{x})-x_3\cdot f_0(\mathbf{x})=-\frac{8}{9} + \frac{142}{75}x_1^{2} + \frac{8}{25}x_2^{2} + \frac{1}{3}x_3 + \frac{13}{75}x_3^{2} + \frac{1}{15}x_3^{3} + \frac{16}{75}x_2x_1^{2} + \frac{8}{25}x_3x_2^{2} = 0
\end{equation}
From a simple check with a graphing calculator we find there is a pair of solutions $\mathbf{x}_{\pm}\approx[\pm0.3020,0.0466,0.1754]$ such that $\tilde{\F}(B(\mathbf{x}_{\pm})^{\otimes 3})=B(\mathbf{x}_{\mp})$.
\end{proof}
Note that while this is the same degree as the single-layer Cayley tree with degree $d_s=4$ with splitting number $g_s:=d_s-1=3$, the bilayer degree $d_b=5$ Cayley tree has the same splitting number $g_b=d_b-2=3$; however, in the single-layer Cayley tree we have a unique ground state, whereas in the bilayer case we have a degenerate N\'{e}el ordered ground state space.\cite{fnw}. 
\end{section}
\begin{section}{Acknowledgements}
The author would like to thank Bruno Nachtergaele for his guidance and support. This research was supported in part by NSF Grants DMS-2108390 and DMS-2510824.
\end{section}
\bibliographystyle{plain}
\bibliography{bib1.bib}
\end{document}